\newtheorem{theorem}{Theorem}
\newtheorem{lemma}{Lemma}
\newtheorem{proposition}{Proposition}
\newtheorem{corollary}{Corollary}
\newtheorem{definition}{Definition} 
\newtheorem{proof}{Proof}
\newtheorem{example}{Example}
\newcommand{\R}{\mathbb{R}}
\newcommand{\norm}[1]{\left\lVert#1\right\rVert}
\newcommand*{\QEDA}{\hfill\ensuremath{\blacksquare}}
\DeclareMathOperator*{\argmin}{arg\,min}
\date{December 31, 2018}
\begin{document}

\title{Designing Coalition-Proof Reverse Auctions\\ over Continuous Goods\let\thefootnote\relax\footnotetext{This research was gratefully funded by the European Union ERC Starting Grant CONENE.}\let\thefootnote\relax\footnotetext{This work has been submitted to the IEEE for possible publication. Copyright may be transferred without notice, after which this version may no longer be accessible.}}
\author{Orcun Karaca\thanks{Orcun Karaca, Pier Giuseppe Sessa and Maryam Kamgarpour are with the Automatic Control Laboratory, Department of Information Technology and Electrical Engineering, ETH Z\"{u}rich, Switzerland. E-mails: {\tt\footnotesize \{okaraca, mkamgar\}@control.ee.ethz.ch, sessap@student.ethz.ch}} \and Pier G. Sessa\footnotemark[1] \and Neil Walton\thanks{Neil Walton is with the School of Mathematics, University of Manchester, United Kingdom. E-mail: {\tt\footnotesize neil.walton@machester.ac.uk}} \and Maryam Kamgarpour\footnotemark[1]
}

\maketitle

\begin{abstract}
This paper investigates reverse auctions that involve continuous values of different types of goods, general nonconvex constraints, and second stage costs. We seek to design the payment rules and conditions under which coalitions of participants cannot influence the auction outcome in order to obtain higher collective utility. Under the incentive-compatible Vickrey-Clarke-Groves mechanism, coalition-proof outcomes are achieved if the submitted bids are convex and the constraint sets are of a polymatroid-type. These conditions, however, do not capture the complexity of the general class of reverse auctions under consideration. By relaxing the property of incentive-compatibility, we investigate further payment rules that are coalition-proof without any extra conditions on the submitted bids and the constraint sets. Since calculating the payments directly for these mechanisms is computationally difficult for auctions involving many participants, we present two computationally efficient methods. Our results are verified with several case studies based on electricity market data.
\end{abstract}
\section{Introduction}\label{sec:1}

{A}{uctions} are effective for allocating resources {and determining their values among a set of} participants. {In an auction,} participants  submit their valuations of the resources (bids), and then {a} central operator determines the allocation and the payment for each participant {as specified by} the auction mechanism. Recently, there has been a surge of interest from the control community to explore auction mechanisms because of their applications in several problems, such as coordinating electric vehicle charging\cite{zou2017efficient}, telecommunication and power networks\cite{zou2017resource}, demand response management \cite{zhou2017eliciting}, provisioning of a distributed database \cite{sanghavi2008new}, and selecting a host for a noxious resource (e.g., trash disposal facility)~\cite{wang2017ex}. 
The central element in any mechanism is the design of the payment and allocation rules, since the participants have incentives to strategize around these rules. In particular, the central operator designs the payment rule to ensure an efficient outcome, that is, an outcome maximizing social welfare. This goal {is best} achieved by solving for the optimal allocation under the condition that the participants submitted their true valuations.

In this paper, we consider mechanism design for reverse auctions that may involve continuous values of different types of goods, general nonconvex constraints, and second stage costs. This is motivated by the fact that several current energy market problems can be cast within this general class of auctions \cite{abbaspourtorbati2016swiss,conejo2010decision, carlson2012miso}. Previous work on these markets considers the pay-as-bid mechanism~\cite{orcun2018game} and the locational marginal pricing mechanism \cite{wu1996folk,schweppe2013spot}. In both mechanisms, participants can bid strategically to influence their {payoff} since these mechanisms do not incentivize truthful bidding. As an alternative, we analyze the Vickrey-Clarke-Groves (VCG) mechanism, which has several theoretical virtues\cite{vickrey1961counterspeculation,clarke1971multipart,groves1973incentives}. 
In particular, unlike many other mechanisms, under the VCG mechanism, every participant finds it more profitable to bid truthfully regardless of the bids of the others. Hence, truthful bidding is the dominant-strategy Nash equilibrium, and we refer to this property as incentive-compatibility. Due to this property, several recent {works propose} the use of the VCG mechanism \cite{xu2017efficient,pgs,samadi2012advanced}.

Despite the desirable theoretical properties of the VCG mechanism, in practice, this mechanism is often deemed undesirable since coalitions of participants can strategically bid to increase their collective utility. Consequently, it is susceptible to different kinds of manipulations, such as collusion and shill bidding \cite{ausubel2006lovely}. 
These shortcomings are decisive in practicality of the VCG~mechanism since in a larger context the VCG~mechanism is not truthful. As a result, practical applications of the VCG~mechanism in real commerce are rare at best~\cite{rothkopf2007thirteen, ausubel2006lovely}.

As is outlined in combinatorial auction literature \cite{ausubel2006lovely}, the shortcomings described above occur when the VCG utilities are not in the~\textit{core}. The core is a concept from coalitional game theory where the participants have no incentives to leave the grand coalition, that is, the coalition of all participants \cite{osborne1994course}. 
 Recently, coalitional game theory has received attention, especially for aggregating power generators~\cite{baeyens2011wind}, and deriving control policies for multi-agent systems~\cite{maestre2014coalitional}. In this paper, we use coalitional game theory to ensure that the VCG mechanism is coalition-proof, in other words, collusion and shill bidding are not profitable. To this end, we derive conditions on the submitted bids and the constraint sets that ensure core VCG utilities by utilizing some recent advances from combinatorial optimization. We show that under convex (or marginally increasing) bids and~polymatroid-type constraints the VCG mechanism is coalition-proof.

The restricted market setting for core VCG utilities, however, does not capture the complexity of general reverse auctions arising in electricity markets. Specifically, these markets may involve nonconvex bids (e.g., generator startup costs\cite{xu2017efficient}), and complex constraint sets that are not polymatroids (e.g., DC or AC optimal power flow constraints~\cite{wu1996folk}). Hence, it may not be possible to ensure core VCG utilities. To this end, we focus on payment rules that are coalition-proof without any extra conditions on the bids and the constraints. 
In particular, we show that selecting the payments from the core yields this desirable property. Naturally, these mechanisms relax the incentive-compatibility of the VCG mechanism. In order to alleviate this issue and incentivize truthfulness, we minimize the participants' abilities to benefit from strategic manipulation among all other coalition-proof mechanisms. 
Variants of these mechanisms are implemented in high-stakes auctions for selling spectrum licenses in the United Kingdom and Switzerland \cite{day2008core,erdil2010new,day2012quadratic,bunz2015faster}. However, these studies are limited to the forward combinatorial auctions. We extend these studies to reverse auctions with potentially continuous values of goods, second stage costs, and general constraints. 


Let us contrast our work with existing mechanism design research.
		The authors in~\cite{lazar2001design} design a forward VCG auction for continuous goods by restricting each participant to submitting a single price-quantity pair to the operator. They show that this mechanism, called progressive second price (PSP) mechanism, has a truthful $\epsilon$-Nash equilibrium, which can be attained by best-response dynamics. The work in~\cite{jia2010analysis} studies the PSP mechanism subject to quantized pricing assumptions, and proposes a fast algorithm converging to a quantized Nash equilibrium with a high probability. More recently, the work in \cite{zou2017efficient} addresses cross-elasticity in PSP design arising from having a multi-period problem for charging electric vehicles. This work is generalized to decentralized procedures and double-sided auctions in \cite{zou2017resource}. In regard to these studies, the PSP mechanism cannot be implemented in dominant-strategies, and truthfulness is only in the price dimension for a given quantity. Specifically, in the PSP mechanism, there is no single true quantity to declare, and the optimal
		quantity depends on the bids of other participants~\cite[\S 3.2]{lazar2001design}. Moreover, the convergence analysis of the best response dynamics is limited to strongly concave true valuations and simple market constraints, for example, availability of a fixed amount of a single continuous good~\cite[\S 3.1]{lazar2001design}.
On the other hand, the work in~\cite{xu2017efficient} applies the VCG mechanism to the wholesale electricity markets and shows that it results in larger payments than the locational marginal pricing mechanism. Nevertheless, none of the aforementioned works consider coalitional manipulations. Finally, the works in \cite{marden2013overcoming,marden2014generalized,li2014decoupling} study the design of the participants' utilities such that the selfish behavior of the participants results in a social welfare maximizing outcome. By contrast, in our case, the true valuations of the participants are \textit{a priori} unknown and they are not part of the design. Instead, we are guiding the participants to a social welfare maximizing outcome by designing meaningful incentives through the payment rule. 

 \vspace{-0.05cm}
Our contributions in the paper are as follows. First, we prove that in the reverse auctions over continuous goods the VCG mechanism is coalition-proof and the VCG utilities lie in the core, if and only if the market objective function~is supermodular, see Theorems~2~and~3. These results are direct extensions of the results from the multiple item setting. Second, considering the special setting of continuous goods and complex constraints, we derive novel conditions on the bids and the constraint sets under which the VCG mechanism is coalition-proof, see Theorems~4~and~5. Third, in Theorem~6, we show that in the reverse auctions over continuous goods selecting payments from the core results in a coalition-proof mechanism without any restrictions on~the bids and the constraints, extending results from the multiple~item~setting. Finally, we address the computational difficulties of these mechanisms by providing two efficient methods. 

The remainder of this paper is organized as follows. In Section~\ref{sec:2}, we introduce a constrained optimization problem that models a general class of reverse auctions. Then, we introduce the VCG mechanism and illustrate that coalitions of participants can influence the outcome. For an analysis of this shortcoming, in Section~\ref{sec:3} we bring in tools from coalitional game theory, namely the {core}, in which the participants do not have incentives to leave the grand coalition. Throughout this section, we investigate conditions under which the VCG mechanism is coalition-proof. Since these conditions do not capture the complexity of the general class of reverse auctions, alternative payment rules are proposed in Section~\ref{sec:4}. In Section~\ref{sec:5}, we present several case studies based on real-world electricity market data to illustrate the conditions we derived and the proposed payment~rules.

\section{Mechanism Framework}\label{sec:2}
We start with a generic reverse auction model. The set of auction participants consists of the central operator $l=0$ and the bidders $l\in L$, where  $L=\{1,\ldots,\lvert L\rvert\}$. Let $t$ be the number of types of goods in the reverse auction. Goods of the same type from different bidders are fungible to the central operator. Each bidder~$l$ has a private true cost~function $c_l:\mathbb R_+^t \rightarrow \mathbb R_+$ which is nondecreasing. We further assume that~$c_l(0)=0$. This assumption holds for many existing reverse auctions, for instance, control reserve markets and day-ahead electricity markets that include generators' start-up costs~\cite{abbaspourtorbati2016swiss, xu2017efficient}. Each bidder~$l$ then submits a bid function to the central operator, denoted by $b_l:\mathbb R_+^t \rightarrow \mathbb R_+$ and nondecreasing with $b_l(0)=0$.

Given the bid profile $\mathcal{B}=\{b_l\}_{l\in L}$, \textit{a mechanism} defines an allocation rule $x_l^*(\mathcal{B})\in\R^{t}_+$ and a payment rule $p_l(\mathcal{B})\in\R$ for each bidder $l$. 
The central operator's objective is 
\begin{equation*}
\bar{J}(x,y;\mathcal{B})=\sum\limits_{l\in L} b_l(x_l) + d(x,y).
\end{equation*}
Here $y\in\R^p$ are additional variables entering the central operator's optimization, in addition to the allocation $x\in\R^{t\lvert L\rvert}$. The function $d:\R^{t\rvert L\rvert}_+\times\R^p\rightarrow \R$ is an additional cost term. For example, in a two-stage auction model, the operator can buy the goods from another market. In this case, $y$ corresponds to the second stage variables and the function $d$ is the second stage cost \cite{abbaspourtorbati2016swiss}. In Section~\ref{sec:5}, we provide a real-world electricity market example where the function $d$ incorporates expected daily market prices in a weekly market.  

In most auction mechanisms, the allocation is determined by minimizing the central operator's objective subject to some market constraints
\begin{equation}\label{eq:main_abstraction}
J(\mathcal{B})=\min_{x,y}\, \bar{J}(x,y;\mathcal{B})\ \,\mathrm{s.t.}\ \,g(x,y)\leq 0, 
\end{equation}
where $g:\R^{t\rvert L\rvert}_+\times\R^{p}\rightarrow \R^{q}$ defines the constraints. (If the above optimization problem is infeasible then the objective value is $J(\mathcal{B})=\infty$.) 
Let the optimal solution of \eqref{eq:main_abstraction} be denoted by $x^*(\mathcal{B})\in \R^{t\rvert L\rvert}_+$ and $y^*(\mathcal{B})\in\R^{p}$. We assume that in case of multiple optima, there is some tie-breaking rule according to a predetermined fixed ordering of the bidders. The solution of the optimization problem \eqref{eq:main_abstraction} is the optimal allocation with respect to the submitted bids, that is, the goods are bought from the bidders with lower bid prices. 

The \textit{utility} of bidder $l$ is given by \begin{equation*}
u_l(\mathcal{B})=p_l(\mathcal{B})-c_l(x^*_l(\mathcal{B})).
\end{equation*} 
A bidder whose bid is not accepted, $x_l^*(\mathcal{B})=0$, is not paid and $u_l(\mathcal{B})=0$.
The total payment made by the central operator is given by  
\begin{equation*}
u_0(\mathcal{B})=-\sum\limits_{l\in L} p_l(\mathcal{B}) - d(x^*(\mathcal{B}),y^*(\mathcal{B})),
\end{equation*}
which can be treated as the utility of the central operator.  Note that this payment can be an expected payment when the function $d$ is an expected second stage cost. As a remark, if the optimization problem \eqref{eq:main_abstraction} is infeasible, we have $u_0(\mathcal{B})=-\infty$.

Constrained optimization problem (\ref{eq:main_abstraction}) defines a general class of reverse auction models. Several current market problems such as stochastic energy market mechanisms \cite{abbaspourtorbati2016swiss,conejo2010decision,pgs,orcun2018game,bouffard2005market,bouffard2005market2,ahmadi2014multi} and energy reserve co-optimized markets \cite{xu2017efficient, carlson2012miso, cheung1999energy, chow2005electricity,amjady2009stochastic,kargarian2014spider,reddy2015joint} can be cast within this model.\footnote{Let us contrast our market framework with the one in \cite{pritchard2010single}. The authors in \cite{pritchard2010single} study a single settlement mechanism for two-stage stochastic markets. These markets include intermittent power generation from renewable resources such as solar energy and wind. These resources cannot accurately predict the quantity they can produce in the first stage of the market. These productions are revealed only in real-time. To accurately price such generation, the mechanism in \cite{pritchard2010single} ties the first stage and the second stage markets together. To this end, the bidders are required to provide linear bid functions for both stages of the market. The mechanism in~\cite{pritchard2010single} then sets prices for both stages of the market simultaneously. The main difference to our work is that the mechanisms we study distribute payments only for the first stage of the market. We assume that the settlement of the second stage is separate from that of the first stage motivated by the markets in \cite{abbaspourtorbati2016swiss, conejo2010decision}.} The constraints may correspond to procurement of the required amounts of power supplies, for instance, in the Swiss control reserve markets accepted reserves must have a deficit probability of less than 0.2\%.  They may also correspond to a transmission network, for instance, in energy markets power injections must satisfy the transmission line limits and the power flow equations. 

Three fundamental properties we desire in mechanism design are 1) individual rationality, 2) efficiency and 3) dominant-strategy incentive-compatibility. A mechanism is \textit{individually rational} if the bidders do not face negative utilities, $u_l(\mathcal{B})\geq 0$, for all $l\in L$. A mechanism is \textit{efficient} if the sum of all the utilities $\sum_{l=0}^{\lvert L\rvert} u_l(\mathcal{B})$ is maximized. 

To define the third property, we first bring in tools from game theory. Let $\mathcal{B}_{-l}$ be the bid profile of all the bidders, except bidder $l$.
The bid profile $\mathcal{B}$ is a \textit{ Nash equilibrium} if for every bidder~$l$, $u_l(\mathcal{B}_l,\mathcal{B}_{-l})\geq u_l(\tilde{\mathcal{B}}_l,\mathcal{B}_{-l})$, $\forall\tilde{\mathcal{B}}_l$.
The bid profile $\mathcal{B}$ is a \textit{dominant-strategy Nash equilibrium} if for every bidder $l$,
$u_l(\mathcal{B}_l,\hat{\mathcal{B}}_{-l})\geq u_l(\tilde{\mathcal{B}}_l,\hat{\mathcal{B}}_{-l})$, $\forall\tilde{\mathcal{B}}_l$, $\forall \hat{\mathcal{B}}_{-l}$. 
Let the truthful bid profile be $\mathcal C=\{c_l\}_{l\in L}$. Then, a mechanism is  \textit{dominant-strategy incentive-compatible} if the bid profile $\mathcal C$ is the dominant strategy Nash equilibrium. In other words, every bidder finds it more profitable to bid truthfully, regardless of the bids of other participants.

\subsection{Currently used payment rules}\label{sec:2a}
We introduce two prominent payment rules widely used for the reverse auctions under consideration.

In the \textit{pay-as-bid mechanism}, the payment rule is
$p_l(\mathcal{B})=b_l(x^*_l(\mathcal{B}))$.\footnote{For instance, several European balancing market are settled under a pay-as-bid
	mechanism---see~\cite{mazzi2018price} and further references therein.}
It follows that each bidder's utility is $u_l(\mathcal{B})=b_l(x^*_l(\mathcal{B}))-c_l(x^*_l(\mathcal{B}))$. A rational bidder would overbid to ensure positive utility. Consequently, under the pay-as-bid mechanism, the central operator calculates the optimal allocation for the inflated bids rather than the true costs. Furthermore, the bidders need to spend resources to learn how to bid to maximize their utility. There are many Nash equilibria arising from the pay-as-bid mechanism, none of which are incentive-compatible. This result was previously shown in our work, see \cite{orcun2018game}.

The \textit{locational marginal pricing (LMP) mechanism} is adopted in the energy markets where transmission networks are present. We refer to \cite{wu1996folk} for an exposition on the calculation of these payments from the Karush-Kuhn-Tucker conditions of the optimization problem \eqref{eq:main_abstraction}. Under this mechanism, a bidder can manipulate its LMP payment by inflating its bids. As a strategic manipulation, a bidder may also withhold its maximum supply\cite{ausubel2014demand,wolfram1997strategic,joskow2001quantitative}.  Similar to the pay-as-bid mechanism, the bidders need to spend resources to learn how to maximize their utilities. Furthermore, a Nash equilibrium may not exist \cite{TangJ13, madrigal2001existence}.

We see that these payment rules do not satisfy the properties of efficiency or incentive-com\-patibility.~Next,~we~explore the properties of the Vickrey-Clarke-Groves mechanism.

\subsection{Vickrey-Clarke-Groves mechanism}\label{sec:2b}

The \textit{Vickrey-Clarke-Groves (VCG) mechanism} is characterized with the payment rule:
\begin{equation*}\label{eq:vcg_definition}
p_l(\mathcal{B})=b_l(x^*_l(\mathcal{B}))+(h(\mathcal{B}_{-l})-J(\mathcal{B})).
\end{equation*}
The function $h(\mathcal{B}_{-l})\in\R$ must be chosen carefully to ensure individual rationality. A particular choice is the \textit{Clarke pivot rule} $
h(\mathcal{B}_{-l})=J(\mathcal{B}_{-l}),  $
where $J(\mathcal{B}_{-l})$ denotes the minimum total cost without bidder $l$, that is, the optimal value of the optimization problem (\ref{eq:main_abstraction}) with $x_l=0$.\footnote{In the case of an auction of a single item, the VCG mechanism with the Clarke pivot rule is equivalent to the second-price (Vickrey) auction.} Note that this mechanism is well-defined under the assumption that there exists a feasible solution when a bidder is removed. However, this is not a restrictive assumption in the presence of many bidders and a second-stage market. 

For the model introduced in \eqref{eq:main_abstraction}, our first result shows that the VCG mechanism first derived in  \cite{vickrey1961counterspeculation,clarke1971multipart,groves1973incentives} satisfies all three fundamental properties. This result is a straightforward generalization of the works in \cite{vickrey1961counterspeculation,clarke1971multipart,groves1973incentives}, which do not consider continuous values of goods, second stage costs, and general constraints.

\begin{theorem}
	\label{thm:incentive_comp}
	Given the market model \eqref{eq:main_abstraction}, 
	\begin{enumerate}
		\item[(i)] The VCG mechanism is {dominant-strategy} {incentive-compatible}. 
		\item[(ii)] The VCG mechanism is \text{efficient}.
		\item[(iii)] The VCG mechanism ensures nonnegative payments and \text{individual rationality} when the Clarke pivot rule is utilized, $
		h(\mathcal{B}_{-l})=J(\mathcal{B}_{-l})$.
	\end{enumerate}
\end{theorem}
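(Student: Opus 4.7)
The plan is to prove the three parts in order, since part (ii) is essentially a corollary of part (i), and the arguments are straightforward extensions of the classical VCG results once one is careful to carry the second-stage cost $d(x,y)$ and the constraint $g(x,y)\le 0$ through every step.

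For part (i), I would fix bidder $l$, fix an arbitrary opposing profile $\mathcal{B}_{-l}$, and substitute the VCG payment rule into the utility expression. Writing $(x^*,y^*)$ for the minimizer of (\ref{eq:main_abstraction}) under $(\tilde{b}_l,\mathcal{B}_{-l})$, I would observe that
\begin{equation*}
u_l(\tilde{b}_l,\mathcal{B}_{-l}) = \tilde{b}_l(x^*_l)-c_l(x^*_l) + h(\mathcal{B}_{-l}) - J(\tilde{b}_l,\mathcal{B}_{-l}).
\end{equation*}
Expanding $J(\tilde{b}_l,\mathcal{B}_{-l})=\tilde{b}_l(x^*_l)+\sum_{k\neq l} b_k(x^*_k)+d(x^*,y^*)$, the term $\tilde{b}_l(x^*_l)$ cancels, leaving
\begin{equation*}
u_l(\tilde{b}_l,\mathcal{B}_{-l}) = h(\mathcal{B}_{-l}) - \Bigl[c_l(x^*_l)+\sum_{k\neq l}b_k(x^*_k)+d(x^*,y^*)\Bigr].
\end{equation*}
Since $h(\mathcal{B}_{-l})$ is independent of $\tilde{b}_l$, the bidder's only lever over utility is the bracketed expression, which is exactly the objective value at $(x^*,y^*)$ of the optimization problem (\ref{eq:main_abstraction}) with $b_l$ replaced by $c_l$. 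Bidding $\tilde{b}_l=c_l$ forces the operator to pick $(x^*,y^*)$ equal to the minimizer of that very problem, minimizing the bracket and thus maximizing $u_l$. Since the argument holds for every $\mathcal{B}_{-l}$, truthful bidding is a dominant strategy.

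For part (ii), once truthful bidding is established, I would sum the utilities to obtain
\begin{equation*}
\sum_{l=0}^{|L|}u_l(\mathcal{C}) = -d(x^*(\mathcal{C}),y^*(\mathcal{C})) - \sum_{l\in L}c_l(x^*_l(\mathcal{C})),
\end{equation*}
as the payment terms telescope. The right-hand side is, by construction, the negative of the minimum of $\bar J$ with the truthful bid profile, so no feasible allocation can yield a larger total utility; hence the mechanism is efficient.

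For part (iii) with the Clarke pivot rule $h(\mathcal{B}_{-l})=J(\mathcal{B}_{-l})$, nonnegativity of payments follows because $J(\mathcal{B}_{-l})\ge J(\mathcal{B})$ (removing bidder $l$ shrinks the feasible set by forcing $x_l=0$) and $b_l(x^*_l(\mathcal{B}))\ge 0$ by the standing assumption on bid functions. For individual rationality under truthful bidding, substituting $h(\mathcal{B}_{-l})=J(\mathcal{B}_{-l})$ into the utility expression derived in part (i) yields $u_l(\mathcal{C})=J(\mathcal{C}_{-l})-J(\mathcal{C})\ge 0$ by the same monotonicity argument.

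The only non-routine aspect is checking that $J(\mathcal{B}_{-l})\ge J(\mathcal{B})$ still holds in the presence of the second-stage variable $y$ and the joint constraint $g(x,y)\le 0$; this is immediate once one notes that every feasible pair for problem (\ref{eq:main_abstraction}) with $x_l=0$ is also feasible for the full problem and that $b_l(0)=0$ ensures the objective values agree on such pairs. I do not foresee any genuine obstacle beyond this bookkeeping, which justifies the paper's remark that the result is a straightforward generalization of the classical VCG analysis.
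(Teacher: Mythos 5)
Your proposal is correct and follows essentially the same route as the paper's proof: the same cancellation of $\tilde b_l(x_l^*)$ and the observation that $(x^*(\mathcal{B}),y^*(\mathcal{B}))$ is feasible but suboptimal for the problem with $c_l$ in place of $b_l$ for part (i), the same telescoping of payments for part (ii), and the same monotonicity $J(\mathcal{B}_{-l})\geq J(\mathcal{B})$ (guaranteed by $b_l(0)=0$) for part (iii). The only cosmetic difference is that the paper establishes individual rationality for any bids satisfying $b_l(x)\geq c_l(x)$, while you state it at the truthful profile, which is the standard special case and does not change the argument.
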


The proof is relegated to Appendix~\ref{app:A}. In summary, all bidders have  incentives to reveal their true costs in a VCG mechanism. Dominant-strategy incentive-compatibility makes it easier for entities to enter the auction, without spending resources in computing optimal bidding strategies. This can promote participation in the market. As a remark, Theorem~\ref{thm:incentive_comp}-(ii) shows that solving for the optimal allocation with the true costs yields an efficient mechanism. In the remainder, we consider the Clarke pivot rule for the VCG mechanism since it ensures individual rationality.

Despite many advantages of the VCG mechanism for the reverse auction model (\ref{eq:main_abstraction}), this mechanism suffers from collusion and shill bidding~\cite{ausubel2006lovely}. Bidders $K\subseteq L$ are \textit{colluders} if they obtain higher collective utility by changing their bids from $\mathcal C_K=\{c_l\}_{l\in K}$ to $\mathcal{B}_K=\{b_l\}_{l\in K}$. Bidder $l$ is a \textit{shill bidder} if there exists a set $S$ and bids $\mathcal{B}_S=\{b_k\}_{k\in S}$ such that bidder~$l$ finds participating with bids~$\mathcal{B}_S$ more profitable than participating with a single truthful bid~$\mathcal{C}_l$. 

To illustrate these issues, we study two energy market examples. For these examples, we consider the VCG mechanism with the Clark-pivot rule and with $d(x,y)\equiv 0$ in the central operator's objective.  Later, we come back to these examples in Section \ref{sec:3}, in order to discuss conditions to eliminate collusion and shill-bidding. First example is a reverse auction of a single type of power supply. In these markets, each bidder is allowed to submit mutually exclusive bids that can equivalently be represented as bid curves, see Section~\ref{sec:3a}.

\begin{example}[Simple Market]\label{ex:first_simple_example}
	Suppose the central operator has to procure $800$ MW of power supply from bidders $1$, $2$ and $3$ who have the true costs $\$100$ for $400$ MW, $\$400$ for $400$ MW and $\$600$ for $800$ MW, respectively. Under the VCG mechanism, bidders $1$ and $2$ win and receive $p_1^{\text{VCG}} = 100 + (600-500)= \$200$ and $p_2^{\text{VCG}} = 400 + (600-500)= \$500$. Suppose bidders $1$ and $2$ collude and change their bids to $\$0$ for $400$ MW. Then, bidders $1$ and $2$ receive a payment of $\$600$ each for the same allocation. In fact, bidders~$1$~and~$2$ could represent multiple identities of a single losing bidder (that is, a bidder with the true cost greater than $\$600$ for $800$ MW). Entering the market with two shill bids, this bidder receives a payment of $2\times\$600$ for $800$ MW.
\end{example}
\vspace{.1cm}

Second example is a power market where the operator is procuring a set of different types of power supplies.

\begin{example}[Power Market]\label{ex:second_simple_example}
	We consider a reverse auction with three different types of supplies, types A, B and C. Here, type A can replace types~B~and~C simultaneously.\footnote{This is an abstraction for power reserve markets where secondary reserves can replace both negative and positive tertiary reserves simultaneously \cite{abbaspourtorbati2016swiss}.} Suppose the central operator has to procure $100$ MW of type B and $100$ MW of type C (or equivalently only $100$ MW of type~A) from bidders $1$ to $5$. Truthful bid profiles of $5$ bidders are provided in Table~\ref{tab:bidprof}. 
	\begin{table}[h]
		\caption{Bid profile in the Power Market}
		\label{tab:bidprof}
		\begin{center}
			\begin{tabular}{|l||l||l||l||l||l|}
				\hline
				Bidders (Types) & $1$ (A) & $2$ (B) &  $3$ (B) & $4$ (C) & $5$ (C) \\
				\hline 
				MW & $100$ & $100$  & $100$ & $100$& $100$ \\
				\hline
				\$ & $500$  & $350$  & $400$ & $250$ & $400$ \\
				\hline
			\end{tabular}
		\end{center}
	\end{table}
	
	Constraint set in \eqref{eq:main_abstraction} is given by
	\begin{equation}	\label{newc2}
	\begin{split}
	\big\{ x\in\{0,100\}^5\, \rvert\, & x_1 + x_2 + x_3 \geq M(\{\text{A},\text{B}\})=100,\\
	& x_1 + x_4 + x_5 \geq M(\{\text{A},\text{C}\})=100\big\}.  
	\end{split}
	\end{equation}	
	Under the VCG mechanism, bidder $1$ wins and receives $p_1^{\text{VCG}} = 500 + (600-500)= \$600$. Suppose losing bidders $2$ and $4$ collude and change their bid prices to $\$0$. Then, bidders $2$ and $4$ receive $\$400$ each and they obtain a collective VCG profit of $\$200$. Total payment of the operator increases from $\$600$ to $\$800$. This is unfair towards bidder $1$ who is willing to offer the same supply for $\$500$.
\end{example}
\vspace{.1cm}

It is troubling that the VCG mechanism can result in large payments through coalitional manipulations. In all these examples, there exists a group of bidders who is willing to offer the same amount of good by receiving less payment. From the central operator's perspective, the operator would instead want to renegotiate the payments with a subset of participants.

Next, we define the desirable auction outcomes as the \textit{coalition-proof} outcomes. By coalition-proof, we mean that a group of bidders who lose when bidding their true cost, cannot profit by a joint deviation, and a bidder cannot profit from bidding with multiple identities. Consequently, auctions with coalition-proof outcomes are immune to collusion and shill bidding, and they would avoid large payments through coalitional manipulations. We remark that it is not possible to expect being fully immune to collusion from all sets of bidders. For instance, no mechanism can eliminate the case where all bidders inflate their bid prices simultaneously, see also the examples in~\cite{beck2009revenue}. Hence, we concentrate our efforts on eliminating collusion from the bidders who lose when bidding truthfully.

\section{Ensuring Coalition-Proof VCG Outcomes}\label{sec:3}

In coalitional game theory, the \textit{core} defines the set of utility allocations that cannot be improved upon by forming coalitions.\footnote{Throughout the paper, we use the term utility allocation and the auction outcome interchangeably.} Here, we show that if the VCG outcome lies in the core, then the VCG mechanism eliminates {{any}} incentives for {collusion} and {shill bidding}. Keeping this in mind, our goal is to derive sufficient conditions to ensure that the VCG outcome lies in the core, and hence the VCG mechanism is coalition-proof.

For every $S\subseteq L$,  let $J(\mathcal{B}_S)$ be the objective function under any set of bids $\mathcal{B}_S=\{b_l\}_{l\in S}$ from the coalition $S$. It is defined by the following expression:
\begin{equation}
\label{eq:33}
\begin{split}
J(\mathcal{B}_S) =  &\min_{x,y}\sum\limits_{l\in S} b_l(x_l) + d(x,y)\\
&\ \mathrm{  s.t. } \ g(x,y)\leq 0,\, x_{-S}=0,
\end{split}
\end{equation}
where the stacked vector $x_{-S}\in\R_+^{t(\lvert L\rvert -\lvert S\rvert)}$ is defined by omitting the subvectors from the set $S$. Note that this function is nonincreasing, that is, $J(\mathcal{B}_R) \geq J(\mathcal{B}_S)$ for $R\subseteq S$.\footnote{This holds since $b_l(0)=0$ for all $l\in L$.} 

Next, we define the core with respect to the truthful bids, $\mathcal{C}_R=\{c_l\}_{l\in R}$, and refer to this definition solely as the \textit{core}. 
\begin{definition}\label{def:core_def}
	For every set of bidders $R\subseteq L$, the core $Core(\mathcal{C}_R)\in\R\times\R^{\rvert R\rvert}_+$ is defined as follows
	\begin{equation}\label{eq:mcoredef}
	Core(\mathcal{C}_R)=\Big\{u\in\R\times\R^{\rvert R\rvert}_+ \,|\, u_0+\sum\limits_{l\in R} u_l=-J(\mathcal{C}_R),\,
	u_0+\sum\limits_{l\in S}u_l\geq-J(\mathcal{C}_S),\, \forall S \subset R \Big\}.
	\end{equation}
\end{definition}
Note that there are $2^{\rvert R\rvert}$ linear constraints that define a utility allocation in the core for the set of bidders $R$. The core is always nonempty in an auction because the utility allocation $u_0=-J(\mathcal{C}_R)$  and $u_l=0$ for all $l\in R$ always lies in the core. This allocation corresponds to the utility allocation of the pay-as-bid mechanism under the truthful bidding $\mathcal{C}_R$.

For the mechanism design, we highlight the implications of the constraints in~\eqref{eq:mcoredef}. Restricting the utility allocation to the nonnegative orthant yields the individual rationality property for the bidders. The equality constraint implies that the mechanism is efficient, since the term on the right is maximized by the optimal allocation. We say that a utility allocation is unblocked if there is no set of bidders that could make a deal with the operator from which every member can benefit, including the operator. This condition is satisfied by the inequality constraints. 

The VCG outcome attains the maximal utility in the core for every bidder. Note that under the VCG mechanism each bidder's utility is given by $u_l^{\text{VCG}}=J(\mathcal{C}_{-{l}})-J(\mathcal{C})$. Then, for every bidder~$l$, $u_l^{\text{VCG}}=\max\left\{u_l\,\rvert\, u\in Core(\mathcal{C})\right\}$, see \cite{ausubel2002ascending,orcun2018game}. {In general, this maximal point may not lie in the core. The~following example gives visual insight about the core in terms of payments and illustrates the dominant-strategy Nash equilibrium of the VCG mechanism corresponding to Example~\ref{ex:first_simple_example}. In this example, shill bidding and collusion are profitable under the VCG mechanism. 
	\begin{example}\label{ex:core_illust}
		We revisit Example \ref{ex:first_simple_example}. Without loss of generality, assume that in case of a tie the central operator prefers bidders $1$ and $2$ over bidder~$3$. We can visualize the core outcomes for the bidders $1$ and $2$ by removing the losing bidder~$3$, $p_3^{\text{VCG}} = 0$, and the operator. Core outcomes and the VCG payments ($p_i^{\text{VCG}}$) are given in Figure \ref{fig:core_picture}. 
		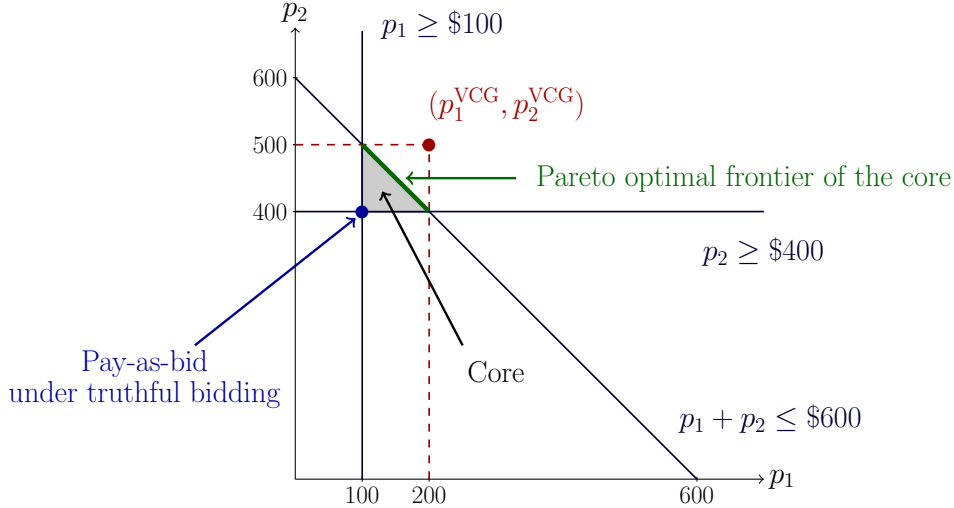
\begin{figure}[th]
			\begin{center}
				\begin{tikzpicture}[scale=0.89, every node/.style={scale=0.55}]
				\coordinate (r0) at (1,5);
				\coordinate (s0) at (2,4);
				\coordinate (si) at (1,4);
				\coordinate (s1) at (1,5);
				\draw[->] (0,0) -- (7,0) node[right] {\huge $p_1$};
				\draw[->] (0,0) -- (0,6.75) node[above] {\huge $p_2$};
				\foreach \x in {1,2,6}
				\draw (\x cm,1pt) -- (\x cm,-1pt) node[anchor=north] {\LARGE\pgfmathparse{100*\x} \pgfmathprintnumber[    
					fixed,
					fixed zerofill,
					precision=0
					]{\pgfmathresult}};
				\foreach \y in {4,5,6}
				\draw (1pt,\y cm) -- (-1pt,\y cm) node[anchor=east]  {\LARGE\pgfmathparse{100*\y} \pgfmathprintnumber[    
					fixed,
					fixed zerofill,
					precision=0
					]{\pgfmathresult}};;
				\draw[scale=1, line width=.22mm, domain=0:6,smooth,variable=\x,black!80!blue] plot ({\x},{6-\x}) node[below] at(7.1,1.2){\huge $p_1+p_2\leq \$600$};
				\draw[scale=1,line width=.22mm, domain=0:7,smooth,variable=\x,black!80!blue]  plot ({\x},{4}) node[right] at(6,3.4) {\huge $p_2\geq \$400$};
				\draw[scale=1, line width=.22mm,domain=0:6.7,smooth,variable=\y,black!80!blue]  plot ({1},{\y}) node[above] at(2.2,6.5) {\huge $p_1\geq  \$100$};
				\draw[scale=1, line width=.22mm, domain=0:2,dashed,variable=\x,black!40!red]  plot ({\x},{5});
				\draw[scale=1, line width=.22mm, domain=0:5,dashed,variable=\y,black!40!red]  plot ({2},{\y}) node[right]  at(1.9,5.6){\huge $(p_1^{\text{VCG}},p_2^{\text{VCG}})$};
				\filldraw[draw=black!80!blue,line width=.22mm, fill=gray!40] (r0) -- (s0) -- (si) -- (s1) -- cycle;
				\node[black!40!red] at (2,4.975) {\Huge\textbullet};
				\node[black!40!blue] at (1,3.975) {\Huge\textbullet};
				\draw[->,line width=.32mm,black!60!green] (3.3,4.5) -- (1.65,4.5) ;
				\node[black!60!green]  at (6.7,4.5) {\huge Pareto optimal frontier of the core};
				\draw[->,line width=.32mm] (2.5,2) -- (1.3,4.3) ;
				\node at (3,1.6) {\Huge $\substack{\text{Core}}$};
				\draw[->,line width=.32mm,black!40!blue] (-1.5,2) -- (0.9,3.9) ;
				\node[black!40!blue] at (-2.25,1.5) {\Huge $\substack{\text{Pay-as-bid}\\ \text{under truthful bidding}}$};
				\draw[scale=1, line width=.6mm, domain=1:2,smooth,variable=\x,black!60!green] plot ({\x},{6-\x});
				\end{tikzpicture}
				\caption{Core outcomes and the VCG payments under truthful bidding}\label{fig:core_picture}
			\end{center}
		\end{figure}
		\end{example}

	Considering that the core will characterize coalition-proof outcomes, we are ready to investigate the conditions under which the VCG outcome lies in the core. To this end, we {provide three} sufficient conditions that ensure core VCG outcomes for the auction model~(\ref{eq:main_abstraction}). Notice that there are $2^{|L|}$ linear core constraints in $Core(\mathcal{C})$, see~\eqref{eq:mcoredef}. First, we derive the following equivalent characterization with significantly lower number of constraints.
\begin{lemma}
	\label{lem:lemma_core}
	Let $W\subseteq L$ be the winners of the reverse auction \eqref{eq:main_abstraction} for the set of bidders $L$, that is, each bidder $l\in W$ is allocated a positive quantity. Let $ u\in\R\times\R^{\rvert L\rvert}_+$ be the corresponding utility allocation. Then, $ u \in Core(\mathcal{C})$ if and only if $ u_0=-J(\mathcal{C})-\sum_{l\in L} u_l$ and
	\begin{equation}
	\label{eq:core_constraints}
	\sum_{l \in K }  u_l\leq J(\mathcal{C}_{-K}) - J(\mathcal{C}),\ \forall K\subseteq W.
	\end{equation}
\end{lemma}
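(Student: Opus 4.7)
My plan is to prove the equivalence by eliminating $u_0$ from the blocking inequalities in Definition~\ref{def:core_def} via the efficiency identity, and then to argue that among the resulting $2^{|L|}-1$ inequalities only the $2^{|W|}-1$ indexed by nonempty subsets of the winners $W$ need to be checked. The two critical ingredients are (i)~that the ``corresponding'' utility allocation assigns $u_l=0$ to every loser $l\in L\setminus W$, as noted in the mechanism framework of Section~\ref{sec:2}, and (ii)~the monotonicity property $J(\mathcal{C}_R)\geq J(\mathcal{C}_{R'})$ whenever $R\subseteq R'$, stated just above Definition~\ref{def:core_def}.

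For the forward direction, I would take $u\in Core(\mathcal{C})$ and simply restate the efficiency equality of Definition~\ref{def:core_def} as $u_0=-J(\mathcal{C})-\sum_{l\in L}u_l$. For any nonempty $K\subseteq W$, the set $S:=L\setminus K$ is a proper subset of $L$, so Definition~\ref{def:core_def} furnishes $u_0+\sum_{l\in S}u_l\geq -J(\mathcal{C}_S)$. Substituting the expression for $u_0$ and noting that $\mathcal{C}_S=\mathcal{C}_{-K}$ immediately yields $\sum_{l\in K}u_l\leq J(\mathcal{C}_{-K})-J(\mathcal{C})$, which is exactly~\eqref{eq:core_constraints}; the case $K=\emptyset$ is vacuous.

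For the converse, I would start from the two conditions of the lemma and verify every defining constraint of $Core(\mathcal{C})$. Nonnegativity of the winners' utilities and the efficiency identity hold by hypothesis. For an arbitrary proper subset $S\subsetneq L$, set $K:=L\setminus S$ and observe that, after substituting the $u_0$ expression, the blocking inequality $u_0+\sum_{l\in S}u_l\geq -J(\mathcal{C}_S)$ is equivalent to $\sum_{l\in K}u_l\leq J(\mathcal{C}_{-K})-J(\mathcal{C})$. The hypothesis supplies this only when $K\subseteq W$, so the crux is to treat $K\not\subseteq W$. I would decompose $K=K_W\sqcup K_L$ with $K_W:=K\cap W$; since losers contribute zero, $\sum_{l\in K}u_l=\sum_{l\in K_W}u_l$. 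Applying~\eqref{eq:core_constraints} to $K_W\subseteq W$ and then the monotonicity bound $J(\mathcal{C}_{-K_W})\leq J(\mathcal{C}_{-K})$ (which holds because $L\setminus K\subseteq L\setminus K_W$) chains to the required inequality.

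The main obstacle is this last reduction from arbitrary coalitions $K$ to their winning parts $K_W$: it is the step that genuinely combines the mechanism-level fact that losers are unpaid with the monotonicity of $J(\cdot)$. Everything else is a direct rearrangement of Definition~\ref{def:core_def}.
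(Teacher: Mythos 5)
Your proposal is correct and follows essentially the same route as the paper's proof: both eliminate $u_0$ via the efficiency identity, use the fact that losers receive zero utility so that only the winners' part of a coalition matters, and invoke the monotonicity $J(\mathcal{C}_R)\geq J(\mathcal{C}_S)$ for $R\subseteq S$ to show that the constraints with $S=L\setminus K$, $K\subseteq W$, dominate all others. Your explicit decomposition $K=K_W\sqcup K_L$ is just a more spelled-out version of the paper's ``dominant constraints'' argument, so there is nothing substantive to add.
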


 The proof is relegated to Appendix~\ref{app:B}. The following proposition is our first sufficient condition for core VCG outcomes.

\begin{proposition}\label{lem:removal_of_two}
	The VCG outcome is in the core, $ u^{\text{VCG}} \in Core(\mathcal{C})$, if the market in \eqref{eq:main_abstraction} is infeasible whenever any {two} winners $l_1,l_2\in W$ are removed from the set of bidders~$L$.
\end{proposition}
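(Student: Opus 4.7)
My plan is to invoke Lemma~\ref{lem:lemma_core} to cut the number of core constraints that must be verified from $2^{|L|}$ down to the $2^{|W|}$ constraints indexed by subsets of winners. Concretely, it is enough to confirm that (i) the VCG utilities satisfy the equality $u_0^{\text{VCG}} = -J(\mathcal{C}) - \sum_{l \in L} u_l^{\text{VCG}}$, and (ii) the inequality $\sum_{l \in K} u_l^{\text{VCG}} \leq J(\mathcal{C}_{-K}) - J(\mathcal{C})$ holds for every $K \subseteq W$.

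For the equality in (i), I would unfold the definitions: $u_l^{\text{VCG}} = p_l^{\text{VCG}} - c_l(x_l^*(\mathcal{C}))$ and $u_0^{\text{VCG}} = -\sum_{l \in L} p_l^{\text{VCG}} - d(x^*(\mathcal{C}), y^*(\mathcal{C}))$, so the payments telescope and what remains is $u_0^{\text{VCG}} + \sum_{l \in L} u_l^{\text{VCG}} = -\sum_{l \in L} c_l(x_l^*(\mathcal{C})) - d(x^*(\mathcal{C}),y^*(\mathcal{C})) = -J(\mathcal{C})$, using that $c_l(0)=0$ for losing bidders. This is a direct computation and equivalent to efficiency (Theorem~\ref{thm:incentive_comp}-(ii)).

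The real content is in (ii), which I would dispatch by splitting on $|K|$. The case $|K|=0$ is trivial. For $|K|=1$, the inequality reduces to $u_l^{\text{VCG}} \leq J(\mathcal{C}_{-l}) - J(\mathcal{C})$, which holds with equality by the very definition of the VCG utility. The key step is the case $|K|\geq 2$: pick any two winners $l_1,l_2\in K$; since $L\setminus K \subseteq L\setminus\{l_1,l_2\}$, the monotonicity $J(\mathcal{B}_R)\geq J(\mathcal{B}_S)$ for $R\subseteq S$ together with the hypothesis that the market is infeasible when $l_1,l_2$ are removed yields $J(\mathcal{C}_{-K}) \geq J(\mathcal{C}_{-\{l_1,l_2\}}) = \infty$. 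Hence the core inequality for such $K$ is vacuously satisfied.

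The main (and only mild) obstacle — minor given that Lemma~\ref{lem:lemma_core} is already in hand — is the conceptual step of recognizing that monotonicity of $J$ promotes the ``remove any two winners'' infeasibility hypothesis to ``remove any $k\geq 2$ winners'' infeasibility, so that the only potentially binding core constraints are those indexed by singleton subsets of $W$, which are exactly the constraints the VCG rule was designed to meet with equality.
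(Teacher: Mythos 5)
Your proposal is correct and follows essentially the same route as the paper: reduce to the subsets of winners via Lemma~\ref{lem:lemma_core}, observe that the hypothesis (together with monotonicity of $J$, i.e.\ infeasibility persisting as more winners are forced to zero) makes every constraint with $|K|\geq 2$ vacuous since $J(\mathcal{C}_{-K})=\infty$, and note the singleton constraints hold with equality by the definition of the VCG utility. The only difference is that you spell out the monotonicity step and the budget-balance equality explicitly, which the paper leaves implicit.
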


\begin{proof} Notice that if the optimization problem \eqref{eq:main_abstraction} is infeasible then the objective value is $J(\mathcal{B})=\infty$. Given this property of the market, inequality constraints \eqref{eq:core_constraints} in Lemma~\ref{lem:lemma_core} simplify to constraints on the utilities of single bidders, that is,
	$u^{\text{VCG}}_l\leq J(\mathcal{C}_{-l}) - J(\mathcal{C}),$ for all $l\in W$.
	This inequality follows directly from the definition of the VCG utility, $u^{\text{VCG}}_l= J(\mathcal{C}_{-l}) - J(\mathcal{C}),$ for all $l\in W$. Equality constraint in Lemma~\ref{lem:lemma_core} is satisfied by definition. \QEDA
\end{proof}

The above condition can only be present in some specialized instances of reverse auctions. It is not possible in general to guarantee that this condition will hold in a market by enforcing restrictions on the bidders and the market model.

{We soon show that supermodularity provides an equivalent condition for core VCG outcomes. }

\begin{definition}
	A function $f:2^{L}\rightarrow\R$ is \textit{supermodular} if $$f(S)-f(S_{-l})\leq f(R)-f(R_{-l}),$$
for all $S\subseteq R\subseteq L$ and for all $l\in S$. Or, equivalently, for all~$S, R\subseteq L$, $f(S\cup R)+f(S\cap R)\geq f(S)+f(R)$ must hold. A function $f:2^{ L}\rightarrow\R$ is \textit{submodular} if $-f$ is supermodular. Furthermore, a function is nondecreasing if $f(S')\leq f(S)$, for all $S'\subseteq S.$
\end{definition}

  For the remainder, the objective function $J$ in~\eqref{eq:33} is said to be supermodular if supermodularity condition holds under any bid profile. Our main result of this section proves that supermodularity of the objective function is necessary and sufficient for ensuring core VCG outcomes.

{
	\begin{theorem}\label{thm:iff_supermodularity}	For any bid profile $\mathcal{C}$ and for any set of participating auction bidders $R\subseteq L$, the outcome of the VCG mechanism is in the core, \text{if and only if} the objective function $J$ in~\eqref{eq:33} is supermodular.  
	\end{theorem}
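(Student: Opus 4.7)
The plan is to translate membership in the core into explicit inequalities on the set function $S\mapsto J(\mathcal{C}_S)$ and then match them against the standard two-element characterization of supermodularity. Writing $J(S)$ for $J(\mathcal{C}_S)$, the VCG utility takes the form $u_l^{\text{VCG}} = J(R\setminus\{l\}) - J(R)$, and by Lemma~\ref{lem:lemma_core} the VCG outcome lies in $Core(\mathcal{C}_R)$ if and only if
\begin{equation*}
\sum_{l\in K}\bigl[J(R\setminus\{l\})-J(R)\bigr] \;\leq\; J(R\setminus K) - J(R)
\end{equation*}
holds for every $K\subseteq W_R$. I would first observe that these inequalities extend to arbitrary $K\subseteq R$ at no extra cost, since any losing bidder $l$ contributes $0$ on the left while $J(R\setminus K)\geq J(R\setminus(K\cap W_R))$ by monotonicity of $J$ in the bidder set.

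For the $(\Leftarrow)$ direction, I would enumerate $K=\{l_1,\dots,l_k\}$ and telescope
\begin{equation*}
J(R\setminus K)-J(R) \;=\; \sum_{i=1}^{k}\bigl[J(R_i\setminus\{l_i\})-J(R_i)\bigr],\qquad R_i := R\setminus\{l_1,\dots,l_{i-1}\}.
\end{equation*}
Supermodularity of $J$, in its marginal form $J(A)-J(A\setminus\{l\})\leq J(B)-J(B\setminus\{l\})$ for $A\subseteq B$ and $l\in A$, applied with $A=R_i$ and $B=R$, yields $J(R_i\setminus\{l_i\})-J(R_i)\geq J(R\setminus\{l_i\})-J(R)$. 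Summing over $i$ recovers the desired core inequality, and since nothing in the argument depends on the bid profile or the choice of $R$, this gives $u^{\text{VCG}}\in Core(\mathcal{C}_R)$ for every $\mathcal{C}$ and every $R\subseteq L$.

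For the $(\Rightarrow)$ direction, I would specialize the core inequality to pairs $K=\{l_1,l_2\}$, obtaining
\begin{equation*}
J(R\setminus\{l_1\})+J(R\setminus\{l_2\}) \;\leq\; J(R\setminus\{l_1,l_2\}) + J(R).
\end{equation*}
Because the hypothesis provides this in every sub-auction, I pick an arbitrary $S\subseteq L$ and distinct $l_1,l_2\notin S$ and apply it with $R = S\cup\{l_1,l_2\}$, producing
\begin{equation*}
J(S\cup\{l_1\})+J(S\cup\{l_2\}) \;\leq\; J(S) + J(S\cup\{l_1,l_2\}),
\end{equation*}
which is the classical two-element characterization of supermodularity on $2^L$. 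Since the bid profile was arbitrary, $J$ is supermodular under every bid profile, as required.

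The step I expect to need the most care is the bookkeeping that justifies using the pairwise core constraint in the $(\Rightarrow)$ direction even when $l_1$ or $l_2$ loses in the sub-auction $R$, so that Lemma~\ref{lem:lemma_core} does not directly supply it. This is the same extension trick used after the first display: the identity $J(R\setminus\{l\})=J(R)$ for losing $l$, combined with $J(R\setminus\{l_1,l_2\})\geq J(R\setminus\{l_2\})$ by monotonicity of $J$, guarantees that the pairwise inequality is either the substantive constraint produced by the winner version or a trivial monotonicity statement, so in every case it coincides with an instance of two-element supermodularity.
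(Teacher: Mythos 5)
Your argument is correct. The ($\Leftarrow$) direction is essentially the paper's proof: Lemma~\ref{lem:lemma_core} plus the same telescoping of $J(\mathcal{C}_{R\setminus K})-J(\mathcal{C}_R)$ into single-bidder marginals bounded via supermodularity. Where you genuinely diverge is the ($\Rightarrow$) direction. The paper argues by contradiction: it takes a violation of supermodularity between nested sets $S\subseteq R$, telescopes to reduce to sets differing by a single bidder $i$, shows from the resulting strict inequality that both $i$ and $l$ are winners, and then exhibits the violated core constraint for $K=\{i,l\}$. You instead go directly: for every sub-auction $R=S\cup\{l_1,l_2\}$ the pairwise core constraint (from Lemma~\ref{lem:lemma_core} when both are winners, and from $J(\mathcal{C}_{R\setminus\{l\}})=J(\mathcal{C}_R)$ for a loser together with monotonicity of $J$ in the bidder set otherwise) yields the two-element inequality at $S$, and you then invoke the standard equivalence of this local condition with supermodularity. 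Your loser-case bookkeeping checks out: a bidder allocated zero leaves the optimal value unchanged when removed, so the pairwise inequality degenerates to $J(\mathcal{C}_{S\cup\{l_1\}})\leq J(\mathcal{C}_S)$, which is monotonicity. What your route buys is a contrapositive-free, case-by-case derivation that makes transparent exactly which core constraints (the pairwise ones) carry the content; what it costs is reliance on the local-to-global characterization of supermodularity as a black box --- the paper does not state that equivalence, and proving it self-contained requires essentially the same telescoping/one-element-reduction the paper inlines in its contradiction argument, which also has the virtue of explicitly identifying the blocking pair of winners. A minor point either way: the equality constraint in Lemma~\ref{lem:lemma_core} is satisfied by the VCG outcome by definition, which you (like the paper's Theorem~2 proof) leave implicit.
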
 
}
Note that a similar result was proven in  \cite{ausubel2006lovely}, for a forward auction of a finite number of items without any other constraints. Our result generalizes this proof to reverse auctions with continuous goods and arbitrary central operator objectives and constraints as modeled in \eqref{eq:main_abstraction}. The proof developed also significantly simplifies the arguments in \cite{ausubel2006lovely}. The proof is relegated to Appendix~\ref{app:C}.

As previously anticipated, we now prove that core outcomes, hence the supermodularity condition, makes collusion and shill bidding unprofitable in a VCG mechanism.

\begin{theorem}
	\label{thm:no_collusions}
			For the set of bidders $L$, consider a VCG auction mechanism modeled by \eqref{eq:main_abstraction}. If the objective function $J$ is supermodular, then, 
	\begin{itemize}
	\item[(i)] {A group of bidders who lose when bidding their true values cannot profit by a joint deviation.}
	\item[(ii)] Bidding with multiple identities is unprofitable for any bidder.\end{itemize}
\end{theorem}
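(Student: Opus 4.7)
The plan is to derive both parts from a single quantitative consequence of Theorem~\ref{thm:iff_supermodularity}: supermodularity of $J$ guarantees that, for \emph{any} submitted bid profile $\tilde{\mathcal{B}}$ and any subset $K$ of its winners, the VCG outcome satisfies the core inequality (Lemma~\ref{lem:lemma_core})
\begin{equation*}
\sum_{l\in K}\bigl[p_l(\tilde{\mathcal{B}})-\tilde b_l\bigl(x_l^*(\tilde{\mathcal{B}})\bigr)\bigr] \;\le\; J(\tilde{\mathcal{B}}_{-K})-J(\tilde{\mathcal{B}}),
\end{equation*}
where $\tilde b_l$ denotes the bid actually submitted by $l$. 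The algebraic lever I would use throughout is to expand $J(\tilde{\mathcal{B}})=\sum_{l\in K}\tilde b_l(x_l^*)+\sum_{l\notin K}\tilde b_l(x_l^*)+d(x^*,y^*)$ and substitute it in, converting the gap between submitted bids and true costs on $K$ into a gap between $J(\mathcal{C})$ and the truthful cost of the feasible point $(x^*(\tilde{\mathcal{B}}),y^*(\tilde{\mathcal{B}}))$.

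For part (i), I would let $K$ be the set of losers under truthful bidding and take $\tilde{\mathcal{B}}=(\mathcal{B}_K,\mathcal{C}_{-K})$, so that $\tilde{\mathcal{B}}_{-K}=\mathcal{C}_{-K}$. Applying the core inequality to $K$ and then adding and subtracting $\sum_{l\in K}c_l(x_l^*(\tilde{\mathcal{B}}))$ to recover true utilities yields
\begin{equation*}
\sum_{l\in K} u_l(\tilde{\mathcal{B}}) \;\le\; J(\mathcal{C}_{-K}) - \Bigl[\sum_{l\in L}c_l(x_l^*(\tilde{\mathcal{B}}))+d(x^*(\tilde{\mathcal{B}}),y^*(\tilde{\mathcal{B}}))\Bigr] \;\le\; J(\mathcal{C}_{-K})-J(\mathcal{C}),
\end{equation*}
the last step using that $(x^*(\tilde{\mathcal{B}}),y^*(\tilde{\mathcal{B}}))$ is feasible for the truthful program. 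Since $K$ are losers under $\mathcal{C}$, the restriction $x_K=0$ is already optimal in $J(\mathcal{C})$, so $J(\mathcal{C}_{-K})=J(\mathcal{C})$ and the coalition's collective payoff is at most zero, matching their truthful utility.

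For part (ii), bidder $l$ creates shills $S$ with bids $\mathcal{B}_S$; I would run the same argument inside the enlarged bidder set $(L\setminus\{l\})\cup S$ with $\tilde{\mathcal{B}}=(\mathcal{B}_S,\mathcal{C}_{L\setminus\{l\}})$ and coalition $S$, so that $\tilde{\mathcal{B}}_{-S}=\mathcal{C}_{L\setminus\{l\}}$ and $J(\tilde{\mathcal{B}}_{-S})=J(\mathcal{C}_{-l})$. The same manipulation produces
\begin{equation*}
u_l^{\text{shill}} \;\le\; J(\mathcal{C}_{-l})-\sum_{k\in L\setminus\{l\}}c_k(x_k^*)-d(x^*,y^*)-c_l\Bigl(\sum_{s\in S}x_s^*\Bigr);
\end{equation*}
collapsing the shill allocations back onto bidder $l$ gives a feasible point of the original program, so the bracketed truthful cost is at least $J(\mathcal{C})$, hence $u_l^{\text{shill}}\le J(\mathcal{C}_{-l})-J(\mathcal{C})=u_l^{\text{VCG}}(\mathcal{C})$, showing that shilling is weakly dominated by truthful bidding. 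The main obstacle I anticipate is the bookkeeping about whose bids play the role of ``truth'' in each step: Theorem~\ref{thm:iff_supermodularity} must be applied with $\tilde{\mathcal{B}}$ as the reference profile, while the coalition's actual payoff is measured against the genuine costs $c_l$, and the two perspectives are reconciled only through the expansion of $J(\tilde{\mathcal{B}})$ above. A smaller technicality specific to (ii) is making precise that re-aggregating the shill allocations into a single allocation for bidder $l$ preserves feasibility and the value of $d$; this holds in the electricity-market settings of Section~\ref{sec:5}, whose constraints and second-stage costs depend only on the per-bidder totals.
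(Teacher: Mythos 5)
Your proposal is correct, but it takes a genuinely different route from the paper's own proof of Theorem~\ref{thm:no_collusions} (Appendix~\ref{app:D}). The paper argues directly with supermodularity: in part (i) it bounds \emph{each} colluder's utility separately, chaining dominant-strategy incentive-compatibility with the supermodular inequality $J(\mathcal{B}_{-l})-J(\mathcal{B}_{-l},\mathcal{C}_l)\leq J(\mathcal{C}_{-K})-J(\mathcal{C}_{-K},\mathcal{C}_l)$, and in part (ii) it introduces a merged bid $\tilde{b}_l$, uses the telescoping supermodularity bound $\sum_{k\in S}[J(\mathcal{B}_{-k})-J(\mathcal{B})]\leq J(\mathcal{B}_{-S})-J(\mathcal{B})$, and finishes with incentive-compatibility. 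You instead let supermodularity enter only once, through Theorem~\ref{thm:iff_supermodularity}: under any submitted profile the VCG revealed utilities lie in the revealed core, so the blocking inequality of Lemma~\ref{lem:lemma_core} (extended to arbitrary coalitions as in Lemma~\ref{lem:implicore}) applies, and the rest is the bid-versus-cost bookkeeping that converts $J(\tilde{\mathcal{B}})$ evaluated at its own optimizer into a feasible suboptimal point of the truthful program. This is essentially the paper's proof of Theorem~\ref{thm:no_collusions_bocs} specialized to the VCG payments, and the paper explicitly notes after that theorem that this argument can serve as an alternative proof of Theorem~\ref{thm:no_collusions}; what it buys is a single unified mechanism-level lemma (``core utilities are coalition-proof'') at the cost of yielding only the collective bound $\sum_{l\in K}u_l\leq 0$ in (i), whereas the paper's direct argument gives the sharper per-colluder bound $u_l\leq 0$. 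Two small points to tighten: the re-aggregation step in (ii) (feasibility and preservation of $d$ when the shill allocations are collapsed onto bidder $l$) is not special to the Section~\ref{sec:5} case studies but is exactly the standing fungibility assumption of Section~\ref{sec:2} that the paper itself invokes (``$g$ and $d$ depend only on $\sum_{l\in L}x_l$''), and in (ii) supermodularity must be understood, as the paper implicitly does, to hold for the enlarged identity set $(L\setminus\{l\})\cup S$ created by the shills.
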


The proof is relegated to Appendix~\ref{app:D}.  Theorem~\ref{thm:no_collusions} shows that if the operator has a supermodular objective (or equivalently, if the VCG outcomes are in the core), then the VCG mechanism is coalition-proof. Given this result, we next investigate sufficient conditions on the bids and the constraint sets {in order to ensure supermodularity and thus coalition-proof~outcomes.} In the following we derive conditions for two classes of markets.

\subsection{Markets for a single type of good}\label{sec:3a}
We start by considering simpler reverse auctions where the operator has to procure a fixed amount $M$  of a single type of good. 
Each bidder~$l$ has a private true cost function $c_l:\mathbb R_+ \rightarrow \mathbb R_+$ that is nondecreasing with $c_l(0)=0$. These {types} of auctions are mainly characterized by single-stage decisions with mutually exclusive bids. This means that a bidder can offer a set of bids, of which only one can be accepted. We first show that such discrete bids fit into our model~\eqref{eq:main_abstraction}.  Here, bidder~$l$ submits truthful bids for $n_l$ discrete amounts as $\{(c_{l,i},x_{l,i})\}_{i=1}^{n_l}$
{where $c_{l,i}\in\R_+$ and the amounts offered by each bidder $x_{l,i} \in \R_+$
	must be equally spaced by some increment $m$  which is a divisor of $M$, that is,
	\begin{equation}\label{eq:simpler_clearing_model_cond}
	x_{l,i}= im,\; \text{for some }{i\in\mathbb Z_+}.
	\end{equation}
}Note that, there is an equivalent representation of the form $c_l(x)\in\R_+$ for $x>0$ as follows: 
{	\begin{equation}\label{cdef}
	c_l(x)= \min_{i=1,...,n_l} \left\{ c_{l,i} \,\rvert\, x_{l,i} \geq x\right\},
	\end{equation}
}where $c_l(0)=0$. This form equivalently represents that all the amounts up to the size of the winning bid are available to the operator. Furthermore, bid prices of this form are piecewise constant and continuous from the left.

We consider auctions cleared by
\begin{equation}	\label{eq:simpler_clearing_model}
\begin{split}
J(\mathcal{C}_S) =  &\min_{x\in\R_+^{\rvert S \rvert}}  \;\sum_{l \in S}c_l(x_l)\\ &\ \ \mathrm{s.t. }\ \,   \sum_{l \in S}x_l \geq M,
\end{split}
\end{equation}
{for} $S\subseteq L$.} 
Note, we can equivalently assume that $x_l$, above, takes values in $\{x_{l,i} \,\rvert\, i \in \mathbb{Z_+} \}\subseteq \mathbb{R}_+$ (cf. \eqref{cdef} and Figure \ref{fig:marginally_inc}) and doing so, we let $x^*=\{x^*_l\}_{l \in S}$ be the optimal values in these sets.

The model \eqref{eq:simpler_clearing_model} is within the auction model \eqref{eq:main_abstraction}. We can now derive conditions on bidders' true costs to ensure supermodularity of $J$. Thus, we derive conditions under which the VCG outcome from~\eqref{eq:simpler_clearing_model} would lie in the core. 
\begin{theorem}
	\label{thm:conditions_on_bids}
	Given \eqref{eq:simpler_clearing_model_cond}, if the true costs are marginally increasing, namely,
	$x_{l,b} - x_{l,a} = x_{l,d} - x_{l,c}$
	implies that 
	$ c_{l,b} - c_{l,a} <  c_{l,d} - c_{l,c} $
	for each bidder $l \in L$ and for each $0 \leq x_{l,a}<x_{l,c}< x_{l,d}$, 
	then the objective function $J$ is supermodular.
\end{theorem}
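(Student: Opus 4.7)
The plan is to recast $J(\mathcal{C}_S)$ as a minimum-cost selection of $k:=M/m$ unit-size increments drawn from a pool built out of the bidders in $S$, and then to prove that this ``$k$-smallest-sum'' set function is supermodular by an order-statistic comparison. For each bidder $l$, define the marginal costs $\Delta_{l,i}:=c_{l,i}-c_{l,i-1}$ with $c_{l,0}=0$, for $i=1,\ldots,n_l$. Substituting $x_{l,a}=(i-1)m$, $x_{l,b}=im$, $x_{l,c}=(j-1)m$, $x_{l,d}=jm$ for any $i<j$ into the marginally-increasing hypothesis yields $\Delta_{l,i}<\Delta_{l,j}$, so each bidder's marginal costs form a strictly increasing sequence.

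Using the step shape of \eqref{cdef}, one may restrict $x_l$ to $\{0,m,\ldots,n_lm\}$ and represent an allocation by integers $i_l\in\{0,\ldots,n_l\}$, with demand constraint $\sum_{l\in S}i_l\geq k$ and cost $\sum_{l\in S}\sum_{j=1}^{i_l}\Delta_{l,j}$. Let $U_T:=\{(l,i):l\in T,\,1\leq i\leq n_l\}$, weighted by $w(l,i):=\Delta_{l,i}$. The strict monotonicity of $\Delta_{l,\cdot}$ ensures that the greedy rule ``sort $U_S$ by $w$ and select the $k$ smallest'' automatically respects the within-bidder prefix property---otherwise a cheaper earlier index of the same bidder would have been chosen first---hence is optimal. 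Writing $v_1^S\leq v_2^S\leq\cdots$ for the sorted weights of $U_S$, we have $J(\mathcal{C}_S)=\sum_{i=1}^{k}v_i^S$ whenever $|U_S|\geq k$, and $J(\mathcal{C}_S)=+\infty$ otherwise. Supermodularity of $J$ then reduces to the inequality $J(\mathcal{C}_{S\setminus\{l^*\}})-J(\mathcal{C}_S)\geq J(\mathcal{C}_{R\setminus\{l^*\}})-J(\mathcal{C}_R)$ for every $l^*\in S\subseteq R$, i.e., readmitting $l^*$ must save at least as much for the smaller coalition.

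Let $u_1<u_2<\cdots$ denote bidder $l^*$'s sorted marginals. Parameterising the top-$k$ of $U_{T\cup\{l^*\}}$ by the number $j$ of $l^*$'s items it contains---the remaining $k-j$ being forced to be the $k-j$ cheapest entries of $U_T$---the saving at $T\in\{S\setminus\{l^*\},\,R\setminus\{l^*\}\}$ equals $\alpha(T)=\max_{0\leq j\leq \min(k,n_{l^*})}\Psi_T(j)$, where $\Psi_T(j):=\sum_{i=k-j+1}^{k}v_i^T-\sum_{j'=1}^{j}u_{j'}$. Since $U_{S\setminus\{l^*\}}\subseteq U_{R\setminus\{l^*\}}$, the order statistics of the smaller pool dominate: $v_i^{S\setminus\{l^*\}}\geq v_i^{R\setminus\{l^*\}}$ for every $i$, hence $\Psi_{S\setminus\{l^*\}}(j)\geq\Psi_{R\setminus\{l^*\}}(j)$ pointwise in $j$; taking maxima yields $\alpha(S\setminus\{l^*\})\geq\alpha(R\setminus\{l^*\})$, the desired inequality. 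The most delicate step is the greedy reformulation in the second paragraph, which relies crucially on strict marginality to make every optimal selection prefix-closed per bidder; the order-statistic comparison is then elementary, and infeasibility is absorbed by the $+\infty$ convention, since $S\subseteq R$ guarantees that any infinities can only appear on the left-hand side of the supermodular inequality and render it vacuously true.
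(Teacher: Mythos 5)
Your proof is correct, but it follows a genuinely different route from the paper's. The paper first proves a monotonicity lemma (Lemma~\ref{lem:increasing_quantities}: when a bidder enters the auction \eqref{eq:simpler_clearing_model}, every incumbent's allocation weakly decreases) and then establishes $J(\mathcal{C}_{S\setminus w})-J(\mathcal{C}_S)\geq J(\mathcal{C}_{R\setminus w})-J(\mathcal{C}_R)$ by constructing explicit suboptimal allocations ($q$, $q'$, the $\delta_l$ transfers) and repeatedly invoking increasing marginals to bound the resulting cost differences. You instead exploit convexity structurally: decomposing each $c_{l,i}$ into unit increments $\Delta_{l,i}$, you observe that strict within-bidder monotonicity makes every optimal selection prefix-closed, so $J(\mathcal{C}_S)$ is simply the sum of the $k=M/m$ smallest weights of the pool $U_S$; supermodularity then drops out of the pointwise domination of order statistics $v_i^{S\setminus\{l^*\}}\geq v_i^{R\setminus\{l^*\}}$ together with the max-over-$j$ parameterization of the saving $\alpha(T)$. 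This is more elementary and more transparent about \emph{why} convexity is the right hypothesis (it converts the market into a greedy ``$k$ cheapest increments'' problem, essentially optimization over a uniform matroid), whereas the paper's allocation-monotonicity lemma is a reusable structural fact about the market that is of independent interest (it is cited again as a corollary). Two minor caveats, neither fatal and neither handled by the paper either: (a) your claim that infinities ``can only appear on the left-hand side'' is not quite accurate --- if $R\setminus\{l^*\}$ has total capacity below $M$ then $J(\mathcal{C}_{R\setminus\{l^*\}})=+\infty$ also enters the right-hand difference; the inequality still holds under the usual extended-real conventions when the subtrahends $J(\mathcal{C}_S)$, $J(\mathcal{C}_R)$ are finite, and in the degenerate all-infinite case one should fall back on the equivalent union/intersection form of supermodularity; (b) your increment decomposition presumes bids are present at every consecutive multiple of $m$ up to each bidder's capacity, which is exactly what the marginally increasing hypothesis presupposes (cf.\ the paper's discussion that bidder $3$ in the Simple Market violates the condition precisely because the intermediate quantity is missing), so this is consistent with the intended reading of \eqref{eq:simpler_clearing_model_cond} and \eqref{cdef}.
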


This setting includes reverse auctions of multiple
identical items as a subset. However, we highlight that our proof does not share any similarities with that of \cite{ausubel2006lovely}, which achieves submodular objective functions in forward combinatorial auctions. Our proof relies on an important lemma we prove showing that the allocations of every bidder is nondecreasing when a bidder is removed from the auction \eqref{eq:simpler_clearing_model}. The proof is relegated to Appendix~\ref{app:E}.
As a corollary of this result, marginally increasing costs imply coalition-proof VCG outcomes for~\eqref{eq:simpler_clearing_model}\, and thus eliminate incentives for collusion and shill bidding. This condition is visualized in Figure~\ref{fig:marginally_inc}. 

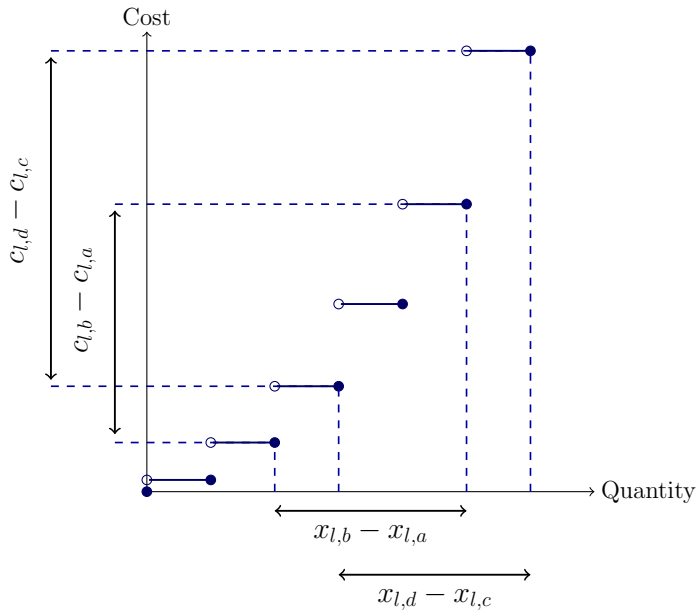
\begin{figure}[t]
	\begin{center}
		\begin{tikzpicture}[scale=0.85, every node/.style={scale=0.65}]
		\draw[->] (0,0) -- (7,0) node[right] {\Large Quantity};
		\draw[->] (0,0) -- (0,7.2) node[above] {\Large Cost};
		\draw[black!60!blue,fill=black!60!blue] (0,0) circle (.075cm);
		\draw[black!60!blue,fill=black!60!blue]  (1,0.1837) circle (.075cm);
		\draw[black!60!blue,fill=black!60!blue]  (2,0.769) circle (.075cm);
		\draw[black!60!blue,fill=black!60!blue]  (3,1.65) circle (.075cm);
		\draw[black!60!blue,fill=black!60!blue] (4,2.937) circle (.075cm);
		\draw[black!60!blue,fill=black!60!blue]  (5,4.5)circle (.075cm);
		\draw[black!60!blue,fill=black!60!blue] (6,6.9) circle (.075cm);
		\draw[black!60!blue]  (0,0.1837) circle (.075cm);
		\draw[black!60!blue]  (1,0.769) circle (.075cm);
		\draw[black!60!blue]  (2,1.65) circle (.075cm);
		\draw[black!60!blue] (3,2.937) circle (.075cm);
		\draw[black!60!blue]  (4,4.5) circle (.075cm);
		\draw[black!60!blue]  (5,6.9) circle (.075cm);
		\draw[scale=1, line width=.22mm, domain=-1.5:6,dashed,variable=\x,black!40!blue]  plot ({\x},{6.9});
		\draw[scale=1, line width=.26mm, domain=5.035:6,variable=\x,black!60!blue]  plot ({\x},{6.9});
		\draw[scale=1, line width=.22mm, domain=0:6.9,dashed,variable=\y,black!40!blue]  plot ({6},{\y});
		\draw[scale=1, line width=.22mm, domain=-1.5:3,dashed,variable=\x,black!40!blue]  plot ({\x},{1.65});
		\draw[scale=1, line width=.26mm, domain=2.035:3,variable=\x,black!60!blue]  plot ({\x},{1.65});
		\draw[scale=1, line width=.22mm, domain=0:1.65,dashed,variable=\y,black!40!blue]  plot ({3},{\y});	
		\draw[scale=1, line width=.22mm, domain=-0.5:5,dashed,variable=\x,black!40!blue]  plot ({\x},{4.5});
		\draw[scale=1, line width=.26mm, domain=4.035:5,variable=\x,black!60!blue]  plot ({\x},{4.5});
		\draw[scale=1, line width=.22mm, domain=0:4.587,dashed,variable=\y,black!40!blue]  plot ({5},{\y});	
		\draw[scale=1, line width=.22mm, domain=-0.5:2,dashed,variable=\x,black!40!blue]  plot ({\x},{.769});
		\draw[scale=1, line width=.26mm, domain=1.035:2,variable=\x,black!60!blue]  plot ({\x},{.769});
		\draw[scale=1, line width=.22mm, domain=0:.737,dashed,variable=\y,black!40!blue]  plot ({2},{\y});
		\draw[scale=1, line width=.26mm, domain=3.035:4,variable=\x,black!60!blue]  plot ({\x},{2.937});
		\draw[scale=1, line width=.26mm, domain=0.035:1,variable=\x,black!60!blue]  plot ({\x},{.1837});
		\draw[<->, line width=.25mm] (2,-0.3) -- (5,-0.3) node at (3.5,-0.65) {\LARGE $x_{l,b} - x_{l,a}$};		
		\draw[<->, line width=.25mm] (3,-1.3) -- (6,-1.3) node at (4.5,-1.65) {\LARGE $x_{l,d} - x_{l,c}$};		
		\draw[<->, line width=.25mm] (-0.5,0.9) -- (-0.5,4.4) node[rotate=90] at (-1,3) {\LARGE$c_{l,b} - c_{l,a}$};
		\draw[<->, line width=.25mm] (-1.5,1.75) -- (-1.5,6.8) node[rotate=90,] at (-2,4.4) {\LARGE $c_{l,d} - c_{l,c}$};									
		\end{tikzpicture}
		\caption{Marginally increasing piecewise constant bid prices}\label{fig:marginally_inc}
	\end{center}
\end{figure}

We note that the analogue of Theorem~\ref{thm:conditions_on_bids} holds for continuous bids and for strictly convex bid curves. This could also be seen as the limiting case, by taking the limit where the incerement $m$ goes~to~$0$.

We illustrate the conditions in Theorem~\ref{thm:conditions_on_bids}  by revisiting Example \ref{ex:first_simple_example} from Section \ref{sec:2}.

\begin{example}[Simple Market]
	Revisiting Example \ref{ex:first_simple_example} and Example \ref{ex:core_illust}, we observe that the bid from bidder $3$ does not satisfy the conditions in Theorem~\ref{thm:conditions_on_bids}, because the bid price for $400$ MW is not submitted. Assume instead bidder $3$ provides the mutually exclusive bids of $\$300$ for $400$ MW and $\$600$ for $800$ MW. Suppose bidders $1$ and $2$ change their bids to 
	$\$0$ for $400$ MW. Then, bidders $1$ and $2$ are the winners and each receives a payment of $\$300$. If they were multiple identities of a single bidder, after shill bidding, sum of their payments would decrease from $\$700$ to $\$600$.
\end{example}

Next, we consider a more general setting with different types of goods where it is still possible to derive conditions to ensure supermodularity and coalition-proof VCG outcomes.

\subsection{Markets for different types of goods}

We now consider reverse auctions where the central operator is procuring a set of different types of goods. Each bidder has a private true cost function $c:\mathbb R_+^t \rightarrow \mathbb R_+$ that is nondecreasing with $c(0)=0$. We assume that this cost has an additive from, $c(x)=\sum_{\tau=1}^{t}c_\tau(x_\tau)$. Typically, in these markets, bids are submitted separately for each type with an upper-bound on the amount to be procured, $\bar{X}_\tau\in\R_+$ \cite{abbaspourtorbati2016swiss}. The operator treats these bids as bids from different identities, and then distributes the payments accordingly. In this case, the set~$L$ is the extended set of bidders such that the bid profile $\mathcal C$ is given by the bids of the form $c_l:\mathbb R_+ \rightarrow \mathbb R_+$, $\bar{X_l}\in\R_+$,  for all $l\in L$.

Let $[t]= \{1,\ldots,t\}$. Define the set $\{A^\tau\}_{\tau=1}^t$ to be a partition of the set $L$ where each set $A^\tau\subseteq L$ is the set of bidders submitting a bid for goods of type $\tau$. Specifically, we consider auctions cleared by the optimization problem:
\begin{equation}	\label{eq:multiple}
\begin{split} 
J(\mathcal{C}_S) =&  \min_{x\in\R^{|L|}_+  } {\ \sum_{l \in S}c_l( x_l)} \\
&\ \ \mathrm{s.t. } \ \sum_{l\in A^T} x_l \geq M(T) ,\, \forall T \subseteq [t], \\
&\quad \quad\ \  x_l \leq \bar{X}_l,\, \forall l\in L, \\
&\quad \quad\ \ x_l = 0,\, \forall l\in L\setminus S, 
\end{split}
\end{equation}
where $A^T=\bigcup_{\tau\in T}A^\tau$. We minimize the sum of declared costs subject to constraints on the subsets of types $[t]$. Here, the function $M:2^{[t]}\to\R_+$ defines the amount the operator wants to procure from possible combinations of different types of goods. We assume that $M(\emptyset)=0$ (normalized). We remark that the optimization problem in~\eqref{eq:multiple} contains the case in Example  \ref{ex:second_simple_example}. 

In the following result we see that if $c_l$ and $M$  satisfy convexity and supermodularity conditions respectively, then $J$ is supermodular.

\begin{theorem}\label{thm:re2_proof}
	{The objective function $J$ given by \eqref{eq:multiple} is supermodular when} $c_l$ is increasing and convex for all $l \in L$, $M$ is supermodular 
and nondecreasing.
\end{theorem}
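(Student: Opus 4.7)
Plan: I would prove $J(S\cup R) + J(S\cap R) \geq J(S) + J(R)$ for every $S,R\subseteq L$ by an exchange argument that combines a per-type reduction with a polymatroidal feasibility argument.

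First I would decompose \eqref{eq:multiple} by type. Introduce the type-aggregate $y_\tau = \sum_{l\in A^\tau} x_l$ and the per-type value
\[
G_\tau^B(y_\tau) := \min\Big\{\textstyle\sum_{l\in B} c_l(x_l) \,:\, \sum_{l\in B} x_l \geq y_\tau,\; 0 \leq x_l \leq \bar X_l\Big\}, \qquad B\subseteq A^\tau.
\]
Convexity and monotonicity of the $c_l$'s make $G_\tau^B$ convex and nondecreasing in $y_\tau$, and the convex (continuous) analog of Theorem~\ref{thm:conditions_on_bids} noted right after its statement gives that $G_\tau^B(y_\tau)$ is supermodular in $B\subseteq A^\tau$ for each fixed $y_\tau$. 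Problem~\eqref{eq:multiple} then reads $J(S) = \min\{\sum_\tau G_\tau^{S\cap A^\tau}(y_\tau) : y\in P_M\cap\mathrm{Box}(S)\}$, where $P_M := \{y\in\R^t_+ : \sum_{\tau\in T} y_\tau \geq M(T),\; \forall T\subseteq [t]\}$ is a contrapolymatroid by supermodularity of $M$ and $\mathrm{Box}(S)_\tau = [0,\sum_{l\in A^\tau\cap S}\bar X_l]$.

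Second, let $x^{S\cup R}, x^{S\cap R}$ be optimal allocations for $J(S\cup R)$ and $J(S\cap R)$, and set $z_l := x^{S\cup R}_l + x^{S\cap R}_l$. I would construct feasible $x^S, x^R$ for $J(S), J(R)$ with $x^S_l + x^R_l = z_l$ for every $l\in L$, supported on $S$ and $R$ respectively: for $l\in S\setminus R$ set $x^S_l = x^{S\cup R}_l$ and $x^R_l = 0$; for $l\in R\setminus S$ set $x^R_l = x^{S\cup R}_l$ and $x^S_l = 0$; for $l\in S\cap R$ set $x^S_l = \alpha_l z_l$ and $x^R_l = (1-\alpha_l) z_l$ for parameters $\alpha_l\in[0,1]$ to be chosen. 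By convexity of $c_l$, for any such choice the per-bidder cost obeys $c_l(x^S_l)+c_l(x^R_l) \leq c_l(x^{S\cup R}_l)+c_l(x^{S\cap R}_l)$, so summing yields $\sum_l c_l(x^S_l)+\sum_l c_l(x^R_l) \leq J(S\cup R)+J(S\cap R)$. The conclusion then reduces to choosing $\{\alpha_l\}$ so that both $x^S$ and $x^R$ satisfy all constraints $\sum_{l\in A^T\cap S} x^S_l\geq M(T)$ and $\sum_{l\in A^T\cap R} x^R_l\geq M(T)$ as well as the per-bidder upper bounds.

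Adding the two target inequalities across $T$ yields the aggregate condition $\sum_{l\in A^T\cap(S\cup R)}x^{S\cup R}_l + \sum_{l\in A^T\cap S\cap R}x^{S\cap R}_l \geq 2M(T)$, which holds by feasibility of $x^{S\cup R}$ and $x^{S\cap R}$. The hard part will be converting this aggregate condition into the existence of a single $\{\alpha_l\}$ meeting all per-$T$ inequalities simultaneously together with the per-bidder bounds; this is where the contrapolymatroidal structure of $P_M$ induced by the supermodularity of $M$ is essential, supplying a polymatroidal exchange that distributes the shared ``budget'' $z$ between $x^S$ and $x^R$ consistently across all $T$. The equal split $\alpha_l = 1/2$ always respects the per-bidder upper bounds (since $z_l\leq 2\bar X_l$), so the polymatroidal exchange needs only to re-tune $\{\alpha_l\}$ around $1/2$ to meet the $M(T)$-constraints, after which the convexity bound on the per-bidder costs closes the inequality.
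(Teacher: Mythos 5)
There are two genuine gaps here. First, the convexity claim you rely on is false as stated: if $x^S_l+x^R_l=x^{S\cup R}_l+x^{S\cap R}_l=z_l$, convexity of $c_l$ does \emph{not} give $c_l(x^S_l)+c_l(x^R_l)\leq c_l(x^{S\cup R}_l)+c_l(x^{S\cap R}_l)$ ``for any such choice'' of $\alpha_l$; with the sum fixed, the two-point sum of a convex function \emph{increases} as the pair spreads out (take $c_l(x)=x^2$, $x^{S\cup R}_l=x^{S\cap R}_l=1$, $\alpha_l=1$: the left side is $4$, the right side $2$). The inequality holds only when $\max(x^S_l,x^R_l)\leq\max(x^{S\cup R}_l,x^{S\cap R}_l)$, so your exchange must produce a split that is simultaneously (i) feasible for both covering systems $\sum_{l\in A^T\cap S}x^S_l\geq M(T)$ and $\sum_{l\in A^T\cap R}x^R_l\geq M(T)$ for all $T$, (ii) within the box bounds, and (iii) componentwise no more spread out than the original pair. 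Second, even setting (iii) aside, the existence of such a simultaneous split is exactly the crux of the theorem, and you do not prove it: the aggregate inequality obtained by adding the two target constraints does not by itself produce a single $\{\alpha_l\}$ meeting every per-$T$ constraint (already with one type, $M=1$, $S=\{1,2\}$, $R=\{1,3\}$, $x^{S\cup R}=(0,1,0)$, $x^{S\cap R}=(1,0,0)$ forces $\alpha_1=0$, far from the equal split), and the appeal to ``a polymatroidal exchange'' is a placeholder, not an argument. Your opening per-type reduction and the supermodularity of $G^B_\tau$ in $B$ are never used in the later steps, so they do not close this gap either.

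For comparison, the paper's proof avoids any explicit exchange construction: it lifts $M$ to a bidder-level function $f(A)=\max\{M(T)\,:\,A\supseteq A^T\}$, shows $f$ inherits supermodularity, observes that the covering constraints in \eqref{eq:multiple} then form a contra-polymatroid, proves the budget identity $\sum_{l}x_l=f(L)$ holds at optimality, passes to the complementary rank function (with a truncation absorbing the box constraints $x_l\leq\bar X_l$), and finally invokes the known result that optimizing a separable concave function over a base polymatroid yields a submodular value function \cite[Theorem~3]{he2012polymatroid}. If you want to pursue your direct route, the missing splitting lemma for contra-polymatroids intersected with boxes, strengthened by the majorization requirement in (iii), is the statement you would have to formulate and prove, and it is of comparable depth to the result the paper cites.
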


The proof is relegated to Appendix~\ref{app:F}, and it builds upon recent advances in polymatroid optimization.\footnote{Polymatroid is a polytope associated with a submodular function. We highlight that under the conditions in Theorem~\ref{thm:re2_proof}, the first set of constraints in~\eqref{eq:multiple} is a contra-polymatroid~\cite{schrijver2003combinatorial}.} As a corollary of this result, the conditions on $c_l$, $l\in L$ and $M$ in Theorem~\ref{thm:re2_proof} imply core VCG outcomes for \eqref{eq:multiple}. The conclusions of Theorem~\ref{thm:no_collusions} on shill bidding and collusion are further corollaries of this~result.

Next, we illustrate that the VCG outcome may not lie in the core for a general polyhedral constraint set by revisiting Example~\ref{ex:second_simple_example}, and then we illustrate how the conditions in Theorem~\ref{thm:re2_proof} imply core outcomes {in this example.}

\begin{example}[Power Market]
	Revisiting Example~\ref{ex:second_simple_example}, for the constraint set \eqref{newc2}, we highlight that $M$ is not supermodular: $M(\{\text{A},\text{B},\text{C}\}) + M(\{\text{A}\}) < M(\{\text{A},\text{B}\}) + M(\{\text{A},\text{C}\}) $ where $M(\{\text{A}\})=0$ and $M(\{\text{A},\text{B},\text{C}\})=0$. Then, the VCG outcome under collusion is blocked by a deal between the operator and bidder~$1$. {Now instead consider the following constraint set:}
	\begin{equation*}	\label{newcz}
	\begin{split}
	\{ x\in\{0,100\}^5  \,\rvert\, & x_1 + x_2 + x_3 \geq \tilde{M}(\{\text{A},\text{B}\})=100,\\
	& x_1 + x_4 +  x_5 \geq \tilde{M}(\{\text{A},\text{C}\})=100,\\
	& x_1 + x_2 + x_3 + x_4 + x_5 \geq \tilde{M}(\{\text{A},\text{B},\text{C}\})=200\}.  
	\end{split}
	\end{equation*}	
	Under this new constraint set, type A can still replace types B and C, but it cannot replace both types simultaneously. In other words, types B and C cannot complement each other to replace type~A as well.
	Note that the function $\tilde{M}$ is supermodular, normalized and nondecreasing, which satisfies all the requirements of Theorem~\ref{thm:re2_proof}.  {So, here the VCG outcome lies in the core.} Specifically,
	under the VCG mechanism, bidders $2$ and $4$ become winners and receive $p_2 = 350 + (650-600)= \$400$ and $p_4 = 250 + (750-600)= \$400$. This outcome is not blocked by any other coalition and collusion is not profitable for bidders. This example illustrates that marginally increasing cost curves alone are not enough to conclude the supermodularity of the reverse auction objective function in~(\ref{eq:main_abstraction}). 
\end{example}

\section{Core-Selecting Mechanisms}\label{sec:4}
\vspace{.1cm}
{In this section, we investigate further payment rules that are coalition-proof without any restrictions on the bidders and the constraint set. In particular, we show that any mechanism that selects its payments from the core is coalition-proof. Under such mechanisms, losing bidders cannot profit from a joint deviation, and a shill bidder cannot profit more than its truthful VCG utility.}

 We first define the core with respect to the submitted bids and refer to it as the revealed core.
\begin{definition}\label{def:r_core_def}
	For every set of bidders $R\subseteq L$, the revealed core $Core(\mathcal{B}_R)\in\R\times\R^{\rvert R\rvert}_+$ is defined as follows
	\begin{equation}\label{eq:corerevdef}
	Core(\mathcal{B}_R)=\Big\{\bar u\in\R\times\R^{\rvert R\rvert}_+ \,|\, \bar u_0+\sum\limits_{l\in R}\bar u_l=-J(\mathcal{B}_R),\,
	\bar u_0+\sum\limits_{l\in S}\bar u_l\geq-J(\mathcal{B}_S),\, \forall S \subset R \Big\}.
	\end{equation}	
\end{definition}

Next, we define utilities with respect to the submitted bids. The revealed utility of bidder~$l$ is defined by $\bar u_l(\mathcal B)=p_l(\mathcal B)-b_l(x^*_l(\mathcal B))$, and the revealed utility of the operator is defined by $\bar u_0(\mathcal B)=u_0(\mathcal B) =-\sum_{l\in L} p_l(\mathcal B) - d(x^*(\mathcal B),y^*(\mathcal B))$.

A mechanism is said to be \textit{core-selecting} if  it is selecting its payments such that the revealed utilities lie in the revealed core $Core(\mathcal{B})$. Then, the payment rule is given by
\begin{equation*}
p_l(\mathcal{B})=b_l(x^*_l(\mathcal{B}))+\bar u_{l}(\mathcal{B}),
\end{equation*} 
where $\bar u\in Core(\mathcal{B})$. We highlight that the revealed core can be defined without the true costs. As a remark, the pay-as-bid mechanism is a core-selecting mechanism where $\bar u_l=0$ for all $l\in L$. Our main result of this section shows that these mechanisms give rise to coalition-proof outcomes. 

\vspace{.1cm}
\begin{theorem}\label{thm:no_collusions_bocs}
	Consider a core-selecting auction mechanism modeled by \eqref{eq:main_abstraction}.
	\begin{itemize}
		\item[(i)] {A group of bidders who lose when bidding their true values cannot profit by a joint deviation.}
		\item[(ii)] \hspace{-.05cm}{Bidding with multiple identities is unprofitable for all bidders with respect to the VCG utilities.}\end{itemize}
\end{theorem}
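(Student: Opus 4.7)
The plan is to derive both statements from the defining inequality of the revealed core combined with an exchange-of-costs argument on the optimal allocation. Both parts share the same skeleton: a revealed-core constraint bounds the sum of revealed utilities of a chosen subset, a feasibility comparison bounds the difference between bid and true costs on that subset, and adding the two inequalities collapses the $J(\mathcal{B}')$ terms and yields the desired bound.

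For part (i), fix losers $K\subseteq L$ under truthful bidding, so $x^*_l(\mathcal{C})=0$ for every $l\in K$, which implies $J(\mathcal{C}_{L\setminus K})=J(\mathcal{C})$. Let $\mathcal{B}'=(\mathcal{B}_K,\mathcal{C}_{-K})$ denote the post-deviation profile. First, I would apply Definition~\ref{def:r_core_def} at the subset $L\setminus K$ and subtract it from the total-sum equality to obtain
\begin{equation*}
\sum_{l\in K} \bar u_l \le J(\mathcal{C}_{L\setminus K})-J(\mathcal{B}') = J(\mathcal{C})-J(\mathcal{B}').
\end{equation*}
Second, I would compare $J(\mathcal{C})$ against the feasible point $(x^*(\mathcal{B}'),y^*(\mathcal{B}'))$ in the truthful optimization and replace the colluders' submitted bids by their true costs, yielding $J(\mathcal{C})-J(\mathcal{B}')\le\sum_{l\in K}[c_l(x^*_l(\mathcal{B}'))-b_l(x^*_l(\mathcal{B}'))]$. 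Adding these two inequalities to $\sum_{l\in K} u_l = \sum_{l\in K}\bar u_l+\sum_{l\in K}[b_l(x^*_l(\mathcal{B}'))-c_l(x^*_l(\mathcal{B}'))]$, which follows from the definition $p_l=b_l(x^*_l)+\bar u_l$ of a core-selecting payment, concludes $\sum_{l\in K} u_l\le 0$, matching the truthful benchmark of zero collective utility for losers.

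For part (ii), apply the same template with the extended bidder set $(L\setminus\{l\})\cup S$ and profile $\mathcal{B}'=\mathcal{B}_S\cup\mathcal{C}_{-l}$. The revealed-core constraint at $L\setminus\{l\}$ gives $\sum_{k\in S}\bar u_k\le J(\mathcal{C}_{-l})-J(\mathcal{B}')$. For the second inequality I would construct a candidate allocation for the original $L$-bidder problem by aggregating the shill quantities back to bidder $l$, namely $\tilde x_l:=\sum_{k\in S}x^*_k(\mathcal{B}')$ together with the unchanged $x^*_{L\setminus\{l\}}(\mathcal{B}')$, obtaining
\begin{equation*}
\sum_{k\in S}b_k(x^*_k(\mathcal{B}'))-c_l\!\bigl(\textstyle\sum_{k\in S}x^*_k(\mathcal{B}')\bigr)\le J(\mathcal{B}')-J(\mathcal{C}).
\end{equation*}
Adding the two inequalities and substituting $p_k=b_k(x^*_k(\mathcal{B}'))+\bar u_k$ into the shill bidder's true collective utility $\sum_{k\in S}p_k-c_l\bigl(\sum_{k\in S}x^*_k(\mathcal{B}')\bigr)$ collapses to the bound $J(\mathcal{C}_{-l})-J(\mathcal{C})=u_l^{\mathrm{VCG}}$, which is the claimed inequality.

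The main obstacle is the feasibility step in part (ii): the aggregated allocation $\tilde x$ must satisfy $g(\tilde x,y^*(\mathcal{B}'))\le 0$ in the original single-identity model \eqref{eq:main_abstraction}. This requires an aggregation-invariance property---the constraint set and the second-stage cost $d$ depend only on the totals procured of each type and not on how those totals are split across identities---which is standard in reverse-auction models but should be stated explicitly before invoking it. Every other step is routine bookkeeping using the revealed-core definition and the affine identity $p_l=b_l(x^*_l)+\bar u_l$, so both conclusions follow from the same short two-inequality argument.
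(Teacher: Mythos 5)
Your proposal is correct, and part (i) is essentially the paper's own argument: the revealed-core constraint at $L\setminus K$ (the paper packages this as Lemma~\ref{lem:implicore}), the observation that losers' removal leaves the objective unchanged, and a feasibility comparison at $(x^*(\mathcal{B}'),y^*(\mathcal{B}'))$ under the profile where colluders bid truthfully — added together exactly as you do. For part (ii) you take a mildly different, more compressed route: the paper first defines a merged bid $\tilde{b}_l$ (the infimal convolution of the shill bids), shows $J(\tilde{\mathcal{B}})=J(\mathcal{B})$ via fungibility, bounds the shill utility by $u_l^{\text{VCG}}(\tilde{\mathcal{B}})$, and only then invokes dominant-strategy incentive-compatibility of the VCG mechanism to reach $u_l^{\text{VCG}}(\mathcal{C})$; you instead aggregate the shill quantities into a single candidate allocation $\tilde{x}_l=\sum_{k\in S}x_k^*(\mathcal{B}')$ for the truthful single-identity problem and make one direct feasibility comparison, which effectively inlines the incentive-compatibility step (itself a feasible-suboptimal-point argument in the paper's Theorem~\ref{thm:incentive_comp}). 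The two routes rest on exactly the same hypothesis — your ``aggregation-invariance'' caveat is precisely the paper's standing fungibility assumption, which its proof also invokes when asserting that $g$ and $d$ depend only on $\sum_{l}x_l$ — so nothing is missing; your version is slightly more self-contained, while the paper's merged-bid formulation makes the comparison to the VCG benchmark and the reuse of Theorem~\ref{thm:incentive_comp} explicit.
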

\vspace{.1cm}

The proof is relegated to Appendix~\ref{app:G}. This proof can be used as an alternative approach to prove Theorem~\ref{thm:no_collusions}. We remark that the proof method differs from Theorem~\ref{thm:no_collusions} since this proof does not require supermodularity.

{The VCG mechanism is known to be the only dominant-strategy incentive-compatible efficient mechanism \cite{green1979incentives}; however, the VCG mechanism can be subject to collusion and shill bidding~\cite{yokoo2004effect}. We showed that this occurs since the VCG outcomes may not lie in the core and thus incentive-compatibility property of the VCG mechanism is relaxed under the core-selecting mechanisms~\cite{day2008core}.}  To alleviate this issue, we investigate the core-selecting mechanisms that minimize bidders' maximum gain from a unilateral deviation from truthful bidding.
We soon see that these mechanisms select the revealed utility allocations that are most preferable for the bidders.

A revealed utility allocation $\bar u\in Core(\mathcal B)$ is \textit{bidder-Pareto-optimal} if there is no $\tilde{u}\in Core(\mathcal B)$ such that $\tilde{u}_{l}\geq \bar u_{l}$ for each $l\in L$, and $\tilde{u}_{l}> \bar u_{l}$ for some bidder $l\in L$. In Figure~\ref{fig:core_picture}, the set of bidder-Pareto-optimal points correspond to the line segment of the core with the maximum total payment. Nash equilibria of the pay-as-bid mechanism for the market model~(\ref{eq:main_abstraction}) are also given by the bidder-Pareto-optimal points in the core with respect to the true costs, see~\cite[Theorem~1]{orcun2018game}. 

It is shown that a core-selecting mechanism minimizes the tendencies to deviate from truthful bids, among all other core-selecting mechanisms,\text{ if and only if} the mechanism chooses a bidder-Pareto-optimal revealed utility allocation \cite{day2008core}. This result follows from the fact that the maximum gain from a deviation from truthful bidding is given by the difference between the VCG payment and the core-selecting one\cite{day2007fair}. We call such mechanisms \textit{bidder optimal core-selecting (BOCS) mechanisms}. 
From Figure \ref{fig:core_picture}, we observe that there are many utility allocations satisfying this property. To obtain a unique outcome, we select the one that minimizes the Euclidean distance to the VCG utilities~\cite{day2012quadratic}. 
For a study on alternative bidder optimal core-selecting payment rules, we refer to \cite{erdil2010new,lubin2015new}.

We remark that if the VCG outcome lies in the core, then it is the unique bidder-Pareto-optimal point in the core, see our work in \cite{orcun2018game}. 
As a result, the VCG and BOCS mechanisms are equivalent under the supermodularity condition. If the supermodularity condition holds, truthful bidding also becomes the dominant-strategy Nash equilibrium of the BOCS mechanism. In Table~\ref{tab:compare_all}, the comparisons of the BOCS, VCG and pay-as-bid mechanisms are provided. 

\begin{table}[ht]
	\vspace{.2cm}
	\caption{Comparison of the mechanisms for the reverse auction \eqref{eq:main_abstraction} ($\star$~\textit{stands for Yes if supermodularity holds})}
	\label{tab:compare_all}
	\begin{center}
		\begin{tabular}{|l||l||l||l|}
			\hline
			Property & Pay-as-bid & BOCS & VCG\\
			\hline
			Incentive-compatibility & No & $\star$ & Yes\\
			\hline
			Utilities are in the revealed core & Yes & Yes & $\star$\\
			\hline
			Coalition-proofness & Yes & Yes & $\star$ \\
			\hline
		\end{tabular}
	\end{center}
\vspace{.2cm}
\end{table} 

Next, we investigate methods to calculate the BOCS payments. Using our previous results, we first reduce the problem size significantly. However, this calculation may still be computationally infeasible for some instances. To this end, we formulate this problem with an iterative approach, which converges fast in practice.

\subsection{Computing payments under the BOCS mechanism}

Finding a Pareto-optimal outcome is computationally difficult for auctions involving many bidders, because one needs to solve the reverse auction problem \eqref{eq:main_abstraction} for $2^{|L|}$ different subsets to define the core constraints in~\eqref{eq:corerevdef}. Furthermore, the reverse auction problem \eqref{eq:main_abstraction} can be NP-hard in some cases. Invoking Lemma~\ref{lem:lemma_core}, we can reduce the number of constraints to $2^{\rvert W\rvert}$, which grows exponentially only in the number of winners. We call this approach with the reduced number of core constraints \textit{the direct BOCS approach}. Here, we formulate this approach for calculating the BOCS payments.

We start by removing the operator from the definition. By invoking Lemma~\ref{lem:lemma_core}, we keep only the constraints corresponding to the subsets of the winners.  As a result, we have that $\bar u \in Core(\mathcal B)$, if and only if $\bar u_0= -J(\mathcal B)-\sum_{l\in L}\bar u_l$ and $\bar u_{-0}$ lies in
\begin{equation}\label{eq:simp_core}
\mathcal{K}(\mathcal B)=\Big\{\bar u_{-0}\in\R^{\rvert L\rvert}_+\,|\,\sum_{l \in K } \bar u_l\leq J(\mathcal{B}_{-K}) - J(\mathcal{B}), \forall K\subseteq W\Big\}.
\end{equation}

As we already discussed, there can be multiple bidder-Pareto-optimal core points. We first maximize the sum of the revealed utilities of the bidders, since such revealed utilities are bidder-Pareto-optimal, and they are also known to minimize the sum of maximum profits of all bidders by a deviation from truthful bidding~\cite{day2007fair}. We can also argue that adopting a total payment maximizing core-selecting mechanism promotes fairness since it is in the combined best interest of the bidders. 
Consequently, BOCS mechanisms maximize the revealed utility of bidders, as follows:
\begin{equation}	\label{eq:max_core}
\nu =  \max_{\substack{\bar u_{-0}\in\mathcal{K}(\mathcal B)} } \bm 1^\top \bar u_{-0}\,.
\end{equation}
From Figure \ref{fig:core_picture}, we observe that the solution $\bar u_{-0}^*$ to \eqref{eq:max_core} may not be unique since there are many bidder-Pareto-optimal utility maximizing allocations.  From this set, we select the point that minimizes the Euclidean distance to the VCG outcome,
\begin{equation}	\label{eq:quadratic_payment}
\bar u_{-0}^{\text{BOCS}} =  \argmin_{\substack{\bar u_{-0}\in\mathcal{K}(\mathcal B),\\ \bm 1^\top \bar u_{-0} = \nu } } \norm{\bar u_{-0}-\bar u_{-0}^{\text{VCG}}}_2^2\,,
\end{equation}
where $\bar u_{l}^{\text{VCG}}=J(\mathcal B_{-l})-J(\mathcal B),$ for all $l\in L$.
A winning bidder gets the payment $p_l^{\text{BOCS}}(\mathcal B)=b_l(x^*_l(\mathcal B))+\bar u_{l}^{\text{BOCS}}$. Taking the VCG revealed utilities as a reference rule is intuitive and one can choose another utility maximizing allocation by studying the market properties and its participants~\cite{day2012quadratic}. 

Unfortunately, this approach may still not be a computationally feasible one since there can be many winners to the reverse auction~\eqref{eq:main_abstraction}. Therefore, we study iterative approaches where core constraints are generated on demand.

\subsection{Iterative approach via core constraint generation}
Because the number of the core constraints in~\eqref{eq:simp_core} increases exponentially with the number of winners, we may not be able to enumerate all the core constraints for the reverse auction in \eqref{eq:main_abstraction}. Hence, the BOCS payments could be hard to calculate.
As suggested in \cite{day2007fair}, the state of the art approach for calculating a core outcome is to use constraint generation and in practice, this algorithm requires the generation of only several core constraints. The method was initially used in the 1950s in order to solve linear programs that have too many constraints \cite{dantzig1954solution}. Instead of directly solving the large problem, one solves a primary problem with only a subset of its original constraints. From this primary solution, one can formulate a secondary problem that adds another constraint to the first step. The algorithm iterates between these two problems and converges to the optimal solution of the large problem. 

Next, we formulate the core constraint generation algorithm to calculate the BOCS payments for the reverse auction \eqref{eq:main_abstraction}. Previously, this constraint generation method was only formulated for forward combinatorial auctions~\cite{day2007fair}. We take into account the conceptual differences of our reverse auction model in \eqref{eq:main_abstraction}. 

Bidders' revealed utilities at the first step of our algorithm are given by $\bar u^0_{-0}=\bar u^{\text{VCG}}_{-0}$, where $\bar u^{\text{VCG}}_{-0}$ is the revealed utilities of the VCG mechanism. Convergence of the algorithm does not require this choice. However, this choice is intuitive since $\bar u^{\text{VCG}}_{-0}$ is the solution to \eqref{eq:quadratic_payment} if the VCG outcome lies in the core. 

As an iterative method, at each step $k$, we find the blocking coalition that has the largest violation for the revealed utility allocation~$\bar u^k_{-0}$.\footnote{We remark that the coalition $C$ is a blocking coalition if $J(\mathcal{B}_C)+\sum_{l\in C}\bar u_l<-\bar u_0$, see the constraints in \eqref{eq:corerevdef}. } If a blocking coalition exists, the coalition with the largest violation for the revealed utility allocation~$\bar{u}^k_{-0}$ is given by \begin{equation}\label{eq:simpleform}C^k=\argmin_{C\subseteq L}\ J(\mathcal{B}_C)+\sum_{l \in C } \bar u_l^k.\end{equation} This follows from the constraints in $Core(\mathcal B)$ in \eqref{eq:corerevdef}. Let $W$ be the set of winners. It is straightforward to see that the existence of this~blocking~coalition~$C^k$ is equivalent to the violation of the constraint $\sum_{l \in W\setminus C^k} \bar u_l^k\leq J(\mathcal  B_{L\setminus\{W\setminus C^k\}})- J(\mathcal  B)$ from the set in \eqref{eq:simp_core}. This follows from the equivalent characterization of the core in Lemma~\ref{lem:lemma_core}. We call the set $W\setminus C^k$ the blocked winners, and the problem \eqref{eq:simpleform} generates the core constraint on the revealed utilities of this set of bidders.


Next, we reformulate the problem in \eqref{eq:simpleform} using inflated bids. Given a revealed utility allocation $\bar u_{-0}^k\in\R^{\rvert L\rvert }_+$, we define the central operator's objective at step~$k$ as 
\begin{equation*}
\bar{J}^k(x,y;\mathcal B)=\sum\limits_{l\in L} b_l^{\bar u_l^k}(x_l) + d(x,y),
\end{equation*}
where the inflated bid $b_l^{\bar u_l^k}(x_l)\in\R_+$ is given by
\begin{equation*}
b_l^{\bar u_l^k}(x_l) = 
\begin{cases}
0 & x_l=0\\
b_l(x_l)+\bar u_l^k& \text{otherwise}. 
\end{cases}
\end{equation*}
Note that even if the bid $b_l$ is a convex bid curve,  the inflated bid $b_l^{\bar u_l^k}$ is not convex if $\bar u_l^k\neq 0$, because of the discontinuity at $0$. As a remark, the discontinuity at $0$ can be described by binary variables.
Then, the optimization problem \eqref{eq:simpleform} for finding the blocking coalition with the largest violation is reformulated as follows
\begin{equation}\label{eq:main_algorithm_ccg}
\begin{split}
z(\bar u^k_{-0})&=\min_{x,y}\ \bar{J}^k(x,y;\mathcal B)\ \mathrm{s.t.}\ g(x,y)\leq 0, \\
x^*(\bar  u^k_{-0})&=\argmin_x \left\{\min_{\substack{y:\ g(x,y)\leq 0}}\bar{J}^k(x,y;\mathcal B) \right\}, \\
C^k&=\{l\in L\,\rvert\,  x_l^*(\bar  u^k_{-0})\neq0\},\\
\end{split}
\end{equation}
where $C^k$ is the blocking coalition with the largest violation for the revealed utility allocation~$\bar{u}^k_{-0}$. Notice that the problem \eqref{eq:main_algorithm_ccg} essentially solves the reverse auction where winners' bids are inflated by their revealed utilities from the earlier step. 

After obtaining the blocking coalition and the corresponding central operator cost $z(\bar{u}^k_{-0})$, we solve the following two problems to obtain another candidate for a BOCS revealed utility allocation. First, we take the subset of revealed utilities that are maximizing the total utility of bidders as follows
\begin{equation}	\label{eq:step_of_ccg}
\begin{split} 
\nu^k =&  \max_{\substack{\bar u_{-0}\in\R^{\rvert L\rvert }_+} } \bm 1^\top \bar u_{-0} \\
&\ \ \ \ \mathrm{s.t. } \sum_{l \in W\setminus{C^t} } \bar u_l\leq J(\mathcal B_{L\setminus\{W\setminus{C^t}\}})- J(\mathcal B),\, \forall t\leq k,\\
& \quad\quad\quad\ \bar u_{-0}\leq\bar u_{-0}^{\text{VCG}},\\
\end{split}
\end{equation}
where $J(\mathcal B_{L\setminus\{W\setminus{C^t}\}}) =z(\bar u_{-0}^t)-\sum_{l \in W\cap C^t } \bar u_l^t $. We highlight that this term, $J(\mathcal B_{L\setminus\{W\setminus{C^t}\}})$, does not require any further solution to the reverse auction problem.
Moreover, the last constraint in~\eqref{eq:step_of_ccg} follows from the core constraints since the VCG outcome is the maximal revealed utility in the core for every bidder~\cite{orcun2018game}. As a result, it is straightforward to see that the problem \eqref{eq:step_of_ccg} essentially solves the problem \eqref{eq:max_core} with only a subset of its original constraints.
Then, we determine a new candidate for the bidders' revealed utilities, $\bar{u}_{-0}^{k+1}$, as the solution to the following quadratic program:
\begin{equation}	\label{eq:next_step_of_ccg}
\begin{split} 
\bar{u}_{-0}^{k+1} =&  \argmin_{\substack{\bar u_{-0}\in\R^{\rvert L\rvert }_+ } } \norm{\bar u_{-0}-\bar u_{-0}^{\text{VCG}}}_2^2\\
&\  \ \ \mathrm{s.t. }\ \ \sum_{l \in W\setminus{C^t} } \bar u_l\leq J(\mathcal  B_{L\setminus\{W\setminus{C^t}\}})- J(\mathcal  B),\, \forall t\leq k,\\
&\ \ \ \quad\quad\quad \bar u_{-0}\leq\bar u_{-0}^{\text{VCG}},\  \bm 1^\top \bar u_{-0} = \nu^k. \\
\end{split}
\end{equation}

The entire process for determining the bidder optimal core-selecting payment is summarized in Algorithm \ref{alg:ccg_algorithm}.
\begin{algorithm}[H]
	\caption{Core Constraint Generation (CCG) Algorithm}
	\begin{algorithmic}[1]\label{alg:ccg_algorithm}
		\renewcommand{\algorithmicrequire}{\textbf{Initialize:} Solve the optimization problem \eqref{eq:main_abstraction}.}
		\renewcommand{\algorithmicensure}{  \textbf{Iteration step $k$:}                           }
		\REQUIRE 
		Calculate the VCG utilities and set  $\bar u^0=\bar u^{\text{VCG}}$ and $k=0$. \\
		\STATE Solve the optimization problem \eqref{eq:main_algorithm_ccg}. 
		\WHILE {$W\setminus{C^k}\neq\emptyset$}
		\STATE Obtain $\bar u_{-0}^{k+1}$ by solving \eqref{eq:step_of_ccg} and then \eqref{eq:next_step_of_ccg}.
		\STATE Update $k = k+1$.
		\STATE Solve the optimization problem \eqref{eq:main_algorithm_ccg}. 
		\ENDWHILE
	    \RETURN $\bar u_{-0}^{k}$.
	\end{algorithmic}
\end{algorithm}

Algorithm \ref{alg:ccg_algorithm} converges to the solution of the optimization problem \eqref{eq:quadratic_payment} for the BOCS mechanism. We refer to \cite[Theorem~4.2]{day2007fair} for the proof of convergence of this algorithm. We note that this algorithm may still require the generation of all possible core constraints, which is equivalent to solving the problem \eqref{eq:main_algorithm_ccg} $2^{\rvert W \rvert}$ times. In practice, even when there are many winners, the algorithm requires the generation of only several core constraints. Using this approach, we can significantly reduce the number of solutions needed for the optimization problem~\eqref{eq:main_abstraction} in order to calculate the BOCS payments.

\section{Numerical Results}\label{sec:5}
Our goal is to compare the effectiveness of the proposed mechanisms and methods based on electricity market data. Towards this we consider the pay-as-bid, the LMP, the BOCS, and the VCG mechanisms. First, we consider an optimal power dispatch problem with a four-node three-generator network and show that the VCG outcomes are not coalition-proof. Then, we illustrate the coalition-proof outcomes obtained by the BOCS mechanism. Second, we consider the IEEE test systems. We compare the total payment under the LMP, the BOCS, and the VCG mechanisms. We verify that the VCG outcomes are in the core for the considered instances. Then, with small modifications to the line limits, we show that the VCG outcomes do not necessarily lie in the core for these test systems as well. Finally, we study the two-stage Swiss reserve procurement auction \cite{abbaspourtorbati2016swiss}. We show that the VCG outcomes are not in the core, hence shill bidding and collusion can be profitable for the bidders.

In order to illustrate the applicability of the BOCS mechanism, we provide the wall-clock time for the computations performed using our setup.  We solve all optimization problems with GUROBI \cite{gurobi}, called through MATLAB via YALMIP~\cite{lofberg2005yalmip}, on a computer equipped with a 32 GB RAM and a 4.0 GHz quad-core Intel i7 processor. 
\subsection{Four-node three-generator network model}
We consider a dispatch problem with polytopic DC power flow constraints in Figure \ref{fig:three_node}, based on the models considered in \cite{wu1996folk}. Cost curves are quadratic polynomials. All lines have the same susceptance. In Figure \ref{fig:three_node}, line limit from node $i$ to node $j$ is denoted by $C_{i,j}=C_{j,i}\in\R_+$. Under the VCG mechanism, payments and utilities are given in the first column of Table~\ref{tableNET}. Suppose via coalition bidders $1$ and~$2$ change their bids to $b_l(x)=0$ for all $x\in\R_+$. Then, bidders $1$ and $2$ are the only winners of the dispatch problem and their payments and utilities are given in the second column of Table \ref{tableNET}. Collusion is profitable and the total payment of the operator increases from \$$260$ to \$$280$. 
\vspace{.2cm}
\begin{figure}[ht]
	\begin{center}
		\begin{tikzpicture}[scale=0.8, every node/.style={scale=0.5}]
		\draw[-,black!80!blue,line width=.32mm] (-0.3,0) -- (-0.3,0.42);
		\draw[-,black!80!blue,line width=.32mm] (0.3,0) -- (0.3,0.42);
		\draw[-,black!80!blue,line width=.32mm] (0.3,3.58) -- (0.3,4);
		\draw[-,black!80!blue,line width=.32mm] (-0.3,3.58) -- (-0.3,4);
		\draw[-,black!80!blue,line width=.32mm] (-3.59,1.8) -- (-4,1.8);
		\draw[-,black!80!blue,line width=.32mm] (-3.59,2.2) -- (-4,2.2);
		\draw[-,black!80!blue,line width=.32mm] (3.59,1.8) -- (4,1.8);
		\draw[-,black!80!blue,line width=.32mm] (3.59,2.2) -- (4,2.2);
		
		\draw[-,black!80!blue,line width=.3mm] (-3.6,1.8) -- (-0.3,0.4) node[anchor=west]  at(-4,0.25) {\LARGE $C_{3,2}=10\text{ MW}$};
		\draw[-,black!80!blue,line width=.3mm] (-0.3,3.6) -- (-3.6,2.2) node[anchor=west]  at(-4, 3.65) {\LARGE $C_{3,1}=10\text{ MW}$};
		\draw[-,black!80!blue,line width=.3mm] (0,4) -- (0,0) node[anchor=west]  at(0.1,2) {\LARGE $C_{1,2}=10\text{ MW}$};
		\draw[-,black!80!blue,line width=.3mm] (0.3,3.6) -- (3.6,2.2) node[anchor=west]  at(1.75,3.65) {\LARGE $C_{1,4}=10\text{ MW} $};
		\draw[-,black!80!blue,line width=.3mm] (0.3,0.4) -- (3.6,1.8) node[anchor=west]  at(1.75,0.25) {\LARGE $C_{2,4}=10\text{ MW} $};
		\draw[-,line width=.85mm] (-1,0) -- (1,0) node[anchor=west]  {\LARGE $2$};
		\draw[-,line width=.85mm] (-1,4) -- (1,4) node[anchor=west]  {\LARGE $1$};
		\draw[-,line width=.85mm] (-4,3) -- (-4,1) node[anchor=west]  {\LARGE $3$};
		\draw[-,line width=.85mm] (4,3) -- (4,1) node[anchor=west]  {\LARGE $4$};	
		\draw[<-,line width=.3mm] (0,-0.05) -- (0,-1);
		\draw[<-,line width=.3mm] (0,4.05) -- (0,5);
		\draw[<-,line width=.3mm] (-4.05,2) -- (-5,2);
		\draw[->,line width=.3mm] (4,2) -- (5,2);
		\draw (0,-1.5) circle (.5cm) node {\LARGE $G_2$} node at(2.7,-1.5) {\huge $c_2(x)=.1x^2 + 12x$};
		\draw (0,5.5) circle (.5cm)node {\LARGE $G_1$} node at(2.7, 5.6) {\huge $c_1(x)=.1x^2 + 12x$};
		\draw (-5.5,2) circle (.5cm) node {\LARGE $G_3$}node at(-8,2) {\huge $c_3(x)=.1x^2 + 5x$};
		\draw (5.5,2) circle (.5cm)node {\LARGE $D$} node at(7.6, 2) {\huge $D=20\text{ MWh}$};
		\end{tikzpicture}
		\caption{Four-node three-generator network}\label{fig:three_node}
	\end{center}
\end{figure}
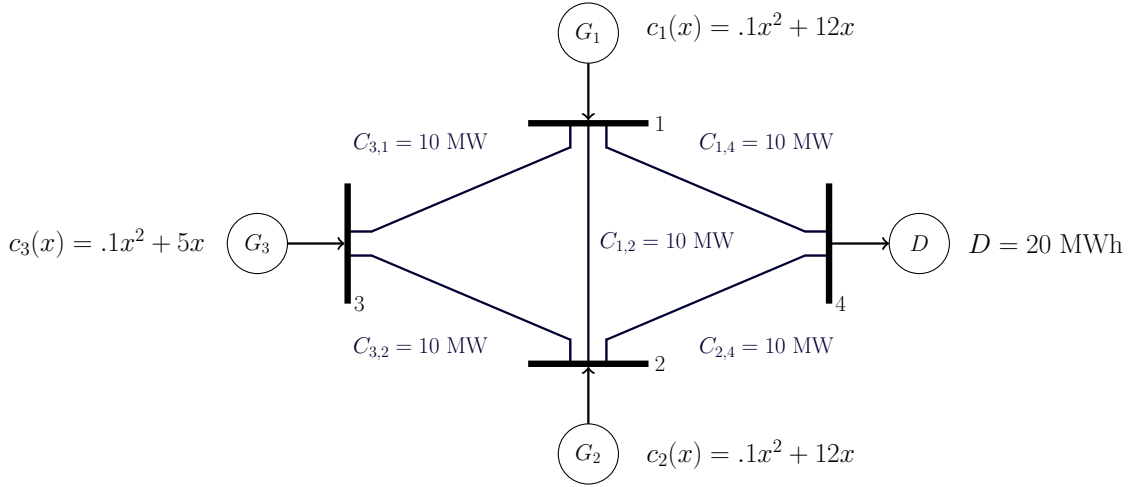

\begin{table}[h]
	\caption{The VCG outcomes for the network model (\$, MWh)}
	\label{tableNET}
	\begin{center}
		\begin{tabular}{|l||l||l||l||l|}
			\hline
			& \multicolumn{2}{l||}{Truthful Bidding} & \multicolumn{2}{l|}{Collusion ($1$, $2$)}\\
			\hline
			& payment (utility)  & $x$ & payment (utility)  & $x$  \\
			\hline
			Bidder $1$ & $0$ $(0)$ & 0 & $140$ $(10)$& 10 \\
			\hline
			Bidder $2$  & $0$  $(0)$ & 0 & $140$ $(10)$ & 10 \\
			\hline
			Bidder $3$ & $260$  $(120)$ & 20& $0$ $(0)$ & 0\\
			\hline
		\end{tabular}
	\end{center}
\end{table}

Next, we consider the BOCS mechanism. Under the BOCS mechanism, payments and utilities are given in the first column of Table~\ref{tableNET2}. Suppose, via coalition, bidders $1$ and $2$ change their bids to $b_l(x)=0$ for all $x\in\R^t_+$. Then, bidders $1$ and $2$ are the only winners of the dispatch problem and their payments and utilities are given in the second column of Table~\ref{tableNET2}. We observe that the collective utility of bidders $1$ and~$2$ reduces to $-\$120$. Hence, in this case, collusion is not profitable for bidders~$1$~and~$2$. Furthermore, after the collusion, the total payment of the operator also reduces from \$$260$ to \$$140$. This example shows that the core-selecting mechanisms can eliminate collusion of the losing bidders.

\begin{table}[h]
	\caption{The BOCS outcomes for the network model (\$, MWh)}
	\label{tableNET2}
	\begin{center}
		\begin{tabular}{|l||l||l||l||l|}
			\hline
			& \multicolumn{2}{l||}{Truthful Bidding} & \multicolumn{2}{l|}{Collusion ($1$, $2$)}\\
			\hline
			& payment (utility)  & $x$ & payment (utility)  & $x$  \\
			\hline
			Bidder $1$ & $0$ $(0)$ & 0 & $70$ $(-60)$& 10 \\
			\hline
			Bidder $2$  & $0$  $(0)$ & 0 & $70$ $(-60)$ & 10 \\
			\hline
			Bidder $3$ & $260$  $(120)$ & 20& $0$ $(0)$ & 0\\
			\hline
		\end{tabular}
	\end{center}
\end{table}
\subsection{IEEE test systems with DC power flow models}
The following simulations are based on the IEEE test systems with polytopic DC power flow constraints adopting the models considered in \cite{wu1996folk}.
\subsubsection{14-bus, 30-bus and 118-bus test systems}

We consider the IEEE $14$-bus \cite{christie2000power}, $30$-bus \cite{alsac1974optimal, ferrero1997transaction} and $118$-bus test systems \cite{christie2000power}. We assume all bidders are truthful and the true cost curves are convex quadratic polynomials, provided in the references. In practice, truthfulness can only hold under the VCG mechanism since it is dominant-strategy incentive-compatible. The corresponding total payments of the mechanisms are shown in Table \ref{tablenet1}. All the mechanisms lead to the same winner allocation as expected. 
\begin{table}[ht]
	\caption{Total payments of the IEEE test systems}
	\label{tablenet1}
	\begin{center}
		\begin{tabular}{|l||l||l||l|}
			\hline
			Mechanism & $14$-bus & $30$-bus & $118$-bus \\
			\hline
			Pay-as-bid & \$$7642.6$  & \$$565.2$  & \$$125947.8$  \\
			\hline
			LMP & \$$10105.1$  &  \$$716.9$  & \$$167055.8$  \\  
			\hline
			BOCS& \$$10513.4$ & \$$746.4$  & \$$169300.4$ \\  
			\hline
			VCG & \$$10513.4$  & \$$746.4$  & \$$169300.4$  \\
			\hline
		\end{tabular}
	\end{center}
\end{table}

For all three test systems, we observe that the VCG mechanism has a slightly larger total payment than the LMP mechanism. Moreover, the VCG payment of every bidder is larger than its LMP payment. For the DC power flow models with increasing convex bids, this result was proven in \cite[Theorem~2]{xu2017efficient}. 

Another observation is that the VCG outcomes are in the core for all systems. Next, we provide an explanation for each system. $14$-bus and $118$-bus systems do not have any line limits, hence, they have the form of \eqref{eq:simpler_clearing_model}. Invoking Theorem~\ref{thm:conditions_on_bids}, we conclude that supermodularity condition holds. Despite the fact that $30$-bus system has line limits, the VCG outcome is in the core. This result can be explained in two ways. First, none of the line limit constraints are tight. Second, we observe that removing two bidders can yield to an infeasible problem, similar to Proposition \ref{lem:removal_of_two}. These test systems are specialized instances and they do not necessarily conclude that the VCG mechanism is coalition-proof for the DC power flow models. We examine this shortcoming of the VCG mechanism in our next simulation.

Computation times are provided only for the $118$-bus case, because the other problems are trivially small. For this electricity market, the direct BOCS approach is not computationally feasible, because there are $19$ winners out of $54$ bidders and the optimal cost calculation takes $451$~milliseconds. This approach would require $66$ hours. Computation times for the VCG mechanism and the CCG algorithm are $24.8$ and $31.4$ seconds respectively. After the VCG mechanism, the CCG algorithm converges only in a single iteration. This iteration takes 6.6 seconds, because it involves binary variables whereas the optimal cost calculation for the market model does not. 

\subsubsection{Effect of line limits}

We consider the IEEE $14$-bus test system, with a line limit on lines exiting node $1$, connecting node~$1$ to nodes $2$ and $5$. We set this line limit to be $10$ MW.  We again assume all bidders are truthful. The corresponding total payments of the mechanisms are shown in Table \ref{tablenet2}. 

\begin{table}[ht]
	\caption{Total payments of the IEEE 14-bus test system with line limits}
	\label{tablenet2}
	\begin{center}
		\begin{tabular}{|l||l|}
			\hline
			Mechanism & 14-bus with line limits \\
			\hline
			Pay-as-bid & \$$9715.2$  \\
			\hline
			LMP & \$$10361.0$  \\  
			\hline
			BOCS& \$$11220.1$   \\  
			\hline
			VCG & \$$11432.1$  \\
			\hline
		\end{tabular}
	\end{center}
\end{table}

We observe that the VCG outcome does not lie in the core. Line limits are tight and the problem does not have the form of \eqref{eq:simpler_clearing_model}. Hence, shill bidding and collusion can be profitable for bidders. Moreover, we observe that the BOCS mechanism yields a larger total payment than the LMP mechanism. We underline that this does not necessarily hold for the payment of a single bidder, because the BOCS payment depends on the Pareto-optimal point chosen. For some bidders, the BOCS payment can be equal to the pay-as-bid payment (which is smaller than the LMP payment), whereas for some others it can be equal to the VCG payment (which is larger than the LMP payment). 

With this result, we also reiterate that convex bid curves are not enough to ensure that the VCG outcomes are in the core. Similar results can be obtained for $118$-bus test system by fixing $50$~MW limits on two lines, one connecting nodes~$5$ and~$6$, another connecting nodes~$9$ and~$10$.  

For the 14-bus example, computation times for the VCG mechanism and the CCG algorithm are $3.6$ and $8.2$ seconds respectively. After the VCG mechanism, the CCG algorithm converges in $4$~iterations. 

\subsection{Swiss reserve procurement auctions}

The following simulations are based on the bids placed in the 46th weekly Swiss reserve procurement auction of 2014~\cite{abbaspourtorbati2016swiss}. The reverse auction involves $21$ power plants bidding for secondary reserves, $25$ for  positive tertiary and 21 for negative tertiary reserves. There are complex constraints arising from nonlinear cumulative distribution functions. The~constraints imply that the deficit of reserves cannot occur with a probability higher than 0.2\%. Moreover, the constraints include coupling between first and second stage decision variables corresponding to the weekly and daily reserve auctions. The corresponding total payments and procured MWs of the pay-as-bid mechanism, the BOCS mechanism, and the VCG mechanism are shown in Table \ref{tablen3}. 
\begin{table}[ht]
	\caption{Total payments of the two stage auction }
	\label{tablen3}
	\begin{center}
		\begin{tabular}{|l||l||l||l|}
			\hline
			Type & SR & PTR & NTR \\
			\hline
			Procured MWs & $409$ MW  & $100$ MW & $114$ MW \\
			\hline
			\multicolumn{2}{|l||}{Total Pay-as-bid payment} & \multicolumn{2}{l|}{$2.293$ million CHF} \\
			\hline
			\multicolumn{2}{|l||}{Total BOCS payment}  & \multicolumn{2}{l|}{ $2.437$ million CHF} \\  
			\hline
			\multicolumn{2}{|l||}{Total VCG payment} & \multicolumn{2}{l|}{$2.529$ million CHF} \\
			\hline
		\end{tabular}
	\end{center}
\end{table}

While the VCG payment rule yields the highest total payment, the BOCS yields the second highest. Note that the bids from the power plants are not marginally nondecreasing for all quantities and the constraints are nonstandard. Hence, the supermodularity condition does not hold and we observe that the VCG outcome does not lie in the core. Consequently, shill bidding and collusion are profitable for bidders under the VCG mechanism.

For this electricity market, the direct BOCS approach is not computationally feasible, because there are $28$ winners and the optimal cost calculation takes $8$ seconds. This approach would require $68$ years. Computation times for the VCG mechanism and the CCG  algorithm are $580.6$ and $659.2$~seconds respectively. After the VCG mechanism, the CCG algorithm converges in $4$~iterations.

\section{Conclusion}\label{sec:6}
We introduced a constrained optimization problem to model reverse auctions that may involve continuous values of different types of goods, general nonconvex constraints, and second stage costs. We discussed the game theoretic analysis of these reverse auction mechanisms under different payment rules. We first showed that the VCG mechanism results in a dominant-strategy incentive-compatible Nash equilibrium. Through examples, we then showed that this mechanism suffers from collusion and shill bidding. Motivated by this problem, we derived three different conditions under which collusion and shill bidding are not profitable, and hence the VCG mechanism is coalition-proof. Since these conditions are restrictive and they may not capture the constrained optimization problem under consideration, we investigated the closest we can achieve to the property of incentive-compatibility under a coalition-proof mechanism. To this end, we formulated the bidder optimal core-selecting mechanism. By removing incentives for manipulations, we expect the bidding process to be simplified, and this can help promote participation in the market. Finally, we verified our results in several case studies based on electricity market data.

As a future work, we will explore learning Nash equilibria in such markets to model the behavior of the bidders in a repeated setting. As an extension, we will consider budget balance in double-sided auctions. Also, we will address the issue of pricing intermittent generation.
\section*{Acknowledgments}\label{sec:7}
We are grateful to Rico Zenklusen for helpful discussions on matroid theory. We thank Swissgrid for electricity market~data.

\appendix
\section{Proof of Theorem~\ref{thm:incentive_comp}}\label{app:A}
\begin{proof}
	(i) We distinguish between bidder $l$ placing a generic bid $\mathcal{B}_l = b_l$ and bidding truthfully $\mathcal C_l= c_l$.  For the set of bids $\mathcal{B}$, the utility of bidder $l$ is given by:
	\begin{align*}
	u_l({\mathcal{B}}) = h(\mathcal{B}_{-l})\, -\Big(\sum\limits_{k\neq l} b_k(x_k^*(\mathcal{B})) + c_l(x_l^*(\mathcal{B})) + d(x^*(\mathcal{B}),y^*(\mathcal{B}))\Big),
	\end{align*}
	where the term in brackets is the cost $\bar{J}$ of $\hat{\mathcal C} =(\mathcal C_l, \mathcal{B}_{-l})$ but evaluated at $(x^*(\mathcal{B}),y^*(\mathcal{B}))$, see \eqref{eq:main_abstraction}. For $\hat{\mathcal C}$ note that $ u_l(\hat{\mathcal C})= h(\mathcal{B}_{-l}) - J(\hat{\mathcal C})$. Then, we have the following:
	\begin{equation*}
	J(\hat{\mathcal C}) \leq \sum\limits_{k\neq l} b_k(x_k^*(\mathcal{B})) + c_l(x_l^*(\mathcal{B})) + d(x^*(\mathcal{B}),y^*(\mathcal{B})).
	\end{equation*}
	We can show that $u_l(\hat{\mathcal C}) \geq u_l({\mathcal{B}})$ because $(x^*(\mathcal{B}),y^*(\mathcal{B}))$ is a feasible suboptimal allocation for the auction under the bids $\hat{\mathcal C}$. Therefore, bidding truthfully is a best response strategy, regardless of other bidders' strategies $\mathcal{B}_{-l}$.
	
	(ii) By the definition of VCG payment rule and incentive-compatibility, we have $p_l(\mathcal C)= u_l(\mathcal C) + c_l(x_l^*(\mathcal C))$ where $\mathcal C=\{c_l\}_{l\in L}$. We then have: $u_0(\mathcal C) = - \sum_{{l\in L}} c_l(x_l^*(\mathcal C)) - d(x^*(\mathcal{C}),y^*(\mathcal{C})) - \sum_{{l\in L}}{u_l(\mathcal C)}$. The sum of utilities, $\sum_{l=0}^{\lvert L\rvert}  u_l(\mathcal C)=- \sum_{{l\in L}} c_l(x_l^*(\mathcal C)) - d(x^*(\mathcal{C}),y^*(\mathcal{C})) $ is maximized since $(x^*(\mathcal{C}),y^*(\mathcal{C}))$ is the minimizer to the optimization problem \eqref{eq:main_abstraction} under true costs.
	
	(iii) Nonnegative payments can be verified substituting Clarke pivot rule for $h(\mathcal{B}_{-l})$:
	\begin{equation*}
	p_l(\mathcal{B})=b_l(x^*_l(\mathcal{B}))+(J(\mathcal{B}_{-l})-J(\mathcal{B}))\geq0,
	\end{equation*}
	for all set of bids $\mathcal{B}$. For individual rationality, assume bidders are not bidding less than their true costs, that is, $b_l(x)\geq c_l(x),\, \forall x\in\R_+^t$. We have
	\begin{equation*}
	u_l(\mathcal{B}) = b_l(x_l^*(\mathcal{B}))-c_l(x_l^*(\mathcal{B}))+J(\mathcal{B}_{-l}) - J(\mathcal{B}) \geq 0,
	\end{equation*}
	for all set of bids $\mathcal{B}$.
	\QEDA
\end{proof}
\section{Proof of Lemma~\ref{lem:lemma_core}}\label{app:B}
\begin{proof}
The utility allocation
	$ u^{}$ is unblocked by every $S\subseteq L$ if and only if
	\begin{equation*}
	-J(\mathcal{C}_S) \leq \sum_{l \in S} u^{}_l + u_0 = \sum_{l \in S} u_l - \sum_{l \in W}  u^{}_l - J(\mathcal{C}),\,\forall S\subseteq L,
	\end{equation*}
	since the losing bidders are not allocated and they obtain zero payment. Thus, the core can equivalently be parametrized as $$\sum_{l \in W\setminus S} u_l \leq J(\mathcal{C}_S)- J(\mathcal{C}),\ \forall S\subseteq L.$$ Moreover, fixing the set $K = W\setminus S$, the dominant constraints are those corresponding to minimal $J(\mathcal{C}_S)$, in particular, when we have $S = L\setminus K$ (this being maximal set with $K$ not taking part in the coalition~$S$). \QEDA
\end{proof}

\section{Proof of Theorem~\ref{thm:iff_supermodularity}}\label{app:C}
\begin{proof}
We prove that supermodularity is sufficient for $u^{\text{VCG}}$ to lie in the core, $Core(\mathcal{C}_R)$, for all $R\subseteq L$. Notice that we have $u_{l,R}^{\text{VCG}}=J(\mathcal{C}_{R\setminus{l}}) - J(\mathcal{C}_R)$, we drop the dependence on $R$ for the sake of simplicity in notation. We recall Lemma~\ref{lem:lemma_core}. Hence, we need to show that
	\begin{equation}
	\label{eq:core_constraints2}
	\sum_{l \in K }  u_l\leq J(\mathcal{C}_{R \setminus K}) - J(\mathcal{C}_R),\ \forall K\subseteq W.
	\end{equation}
	
	Let $K=\{ l_1, \dots, l_k\}$. 
	{Notice that, by supermodularity,
		$
		J(\mathcal{C}_{R\setminus{l_\kappa}}) 
		- J(\mathcal{C}_R) 
		\leq 
		J(\mathcal{C}_{R \setminus \{ l_\kappa ,..., l_k \} })
		-
		J(\mathcal{C}_{R\setminus \{ l_{\kappa+1} ,..., l_k \}}) 
		$. 
		Thus 
		\begin{align*}
		\sum_{l\in K} u_l^{\text{VCG}} 
		&= 
		\sum_{\kappa=1}^k
		J(\mathcal{C}_{R\setminus{l_\kappa}}) 
		- J(\mathcal{C}_R) 
		\\
		&\leq 
		\sum_{\kappa =1}^k
		J(\mathcal{C}_{R \setminus \{ l_\kappa ,..., l_k \}} )
		- J(\mathcal{C}_{R\setminus  \{ l_{\kappa+1} ,..., l_k \}}) 
		\\
		&=
		J(\mathcal{C}_{R \setminus K} ) - J(\mathcal{C}_R).
		\end{align*}
		The last equality holds by a telescoping sum.}
	Thus, we see that \eqref{eq:core_constraints2} holds and so, by Lemma~\ref{lem:lemma_core}, the VCG outcome belongs to the core. The same argument can be repeated to obtained core VCG outcomes for any set of participating auction bidders $R\subseteq L$ and for any profile $\mathcal{C}$.

	To prove that supermodularity is also necessary for outcomes to lie in the core, {we proceed by contradiction.} Suppose that the supermodularity condition does not hold for a bidder~${l}$. Then, there exist sets ${S}\subseteq{R}$ where $J(\mathcal{C}_{R\setminus{l}})\! -\! J(\mathcal{C}_R)\!>\!  J(\mathcal{C}_{S\setminus{l}})\! -\! J(\mathcal{C}_S)$. We may,{ without loss of generality,} choose $R= S\cup \{i\}$ for some $i$. To see this, take $S^0=S$ and
	$S^\kappa=S^{\kappa-1}\cup \{ l_\kappa\}$ with $S^k=R$, then,
		\begin{align}
		&\sum_{\kappa =1}^k J(\mathcal{C}_{S^\kappa\setminus{l}})-J(\mathcal{C}_{S^{\kappa-1}\setminus{l}}) = J(\mathcal{C}_{R\setminus{l}})-J(\mathcal{C}_{S\setminus{l}}) \notag\\
		&\hspace{4.25cm} >  J(\mathcal{C}_R)-J(\mathcal{C}_S) = \sum_{\kappa=1}^k J(\mathcal{C}_{S^\kappa})-J(\mathcal{C}_{S^{\kappa-1}}).\notag
		\end{align}
	The strict inequality above must hold for one of the summands  $J(\mathcal{C}_{S^\kappa\setminus {l}})-J(\mathcal{C}_{S^{\kappa-1}\setminus {l}}) > J(\mathcal{C}_{S^\kappa})-J(\mathcal{C}_{S^{\kappa-1}})$.  So we may consider sets ${S} \subseteq {R}$ that differ by one bidder, say $i$. Thus, by this observation, we have
	$ J(\mathcal{C}_{{R}\setminus {l}}) - J(\mathcal{C}_{R}) > J(\mathcal{C}_{{S}\setminus {l}}) - J(\mathcal{C}_{S}) = J(\mathcal{C}_{{R}\setminus{\{i,l\}}}) - J(\mathcal{C}_{{R}\setminus {i}})$ for $i \in {R}\setminus {S}$.
	Further, after rearranging the above inequality we obtain
	{
		\begin{equation}\label{eq:JR}
		J(\mathcal{C}_{R\setminus{i}})-J(\mathcal{C}_{R}) > J(\mathcal{C}_{R\setminus{\{i,l\}}})-J(\mathcal{C}_{R\setminus{l}})\geq 0.
		\end{equation}	
	}That is, \textit{both} bidder $i$ and $l$ are winners of the VCG auction with bidders $R$.  
	Considering the auction with the set of bidders ${R}$ and with  $i,{l}\in W$, and ${K}=\{i, {l}\}$, we have: 
	{
		\begin{align*}
		\sum_{l' \in {K}} u_{l'}^{\text{VCG}}=&\sum_{l' \in {K}}\!\! {J(\mathcal{C}_{{R}\setminus{l'}}) - J(\mathcal{C}_{R})} \\
		=& 
		J(\mathcal{C}_{{R}\setminus{\{i\}}}) - J(\mathcal{C}_{{R}}) + J(\mathcal{C}_{{R}\setminus{l}}) - J(\mathcal{C}_{R}) 
		\\
		>& 
		J(\mathcal{C}_{{R}\setminus{\{i,l\}}}) - J(\mathcal{C}_{{R}\setminus{l}}) + J(\mathcal{C}_{{R}\setminus{l}}) - J(\mathcal{C}_{R}) 
		\\
		=& J(\mathcal{C}_{{R} \setminus {K}}) - J(\mathcal{C}_{R}),
		\end{align*}
		where in the inequality above we apply \eqref{eq:JR}.
		Thus, given Lemma~\ref{lem:lemma_core}, \eqref{eq:core_constraints2} does not hold, consequently, $u^{\text{VCG}} \notin Core(\mathcal{C}_R)$. Thus the outcome of the VCG mechanism is not in the core for the subset of bidders $R\subseteq L$.
	}
	\QEDA
\end{proof}
	
\section{Proof of Theorem~\ref{thm:no_collusions}}\label{app:D}
\begin{proof}
		(i) Let $K$ be a set of colluders who would lose the auction when bidding their true values $\mathcal{C}_K=\{c_l\}_{l\in K}$, when bidding $\mathcal{B}_K=\{b_l\}_{l\in K}$ they become winners, that is, they are all allocated a positive quantity. We define $ {\mathcal{C}} = (\mathcal{C}_{K}, \mathcal{C}_{-K})$ and ${\mathcal{B}}=({\mathcal{B}}_{ K},  \mathcal{C}_{-K})$ where $\mathcal{C}_{-K}=\{c_l\}_{l\in L\setminus K}$ denotes the bidding profile of the remaining bidders. As a remark, the profile $\mathcal{C}_{-K}$ is not necessarily a truthful profile. The VCG utility that each player $l$ in $K$ receives under ${\mathcal{B}}$ is
		\begin{align*} 
		u_l^{\text{VCG}}({\mathcal{B}}) &\leq u_l^{\text{VCG}}({\mathcal{B}}_{-l}, \mathcal{C}_l)\\  &= J({\mathcal{B}}_{-l} ) - J({\mathcal{B}}_{-l}, \mathcal{C}_l) \\  
		& \leq J( \mathcal{C}_{-K}) - J( \mathcal{C}_{-K} , {\mathcal{C}}_l) \\
		& =  J( \mathcal{C}_{-l}) - J( \mathcal{C}) \\ &=u_l^{\text{VCG}}({\mathcal{C}})\\&=0,
		\end{align*}
		where the first inequality follows from the dominant-strategy incentive-compatibility of the VCG mechanism. The first equality comes from the definition of the VCG mechanism and the second inequality applies the supermodularity of the function~$J$. The next equality comes from the fact that the set $K$ originally was a group of losers. 
		
		So we see that, for all $l \in K$, the utility $u_l^{\text{VCG}}({\mathcal{B}}) $ is upper bounded by the utility that bidder $l$ would receive if every colluder was bidding truthfully. However, by the initial assumption, these bidders were losers while bidding truthfully, and hence $u_l^{\text{VCG}}({\mathcal{C}})=0,$ for all $l \in K$. Thus there is no benefit for losers from colluding by jointly deviating from their truthful bids.
		
		(ii) Similar to part (i), define $\mathcal{C}=(\mathcal{C}_{-l},\mathcal{C}_{l})$. The profile $\mathcal{C}_{-l}$ is not necessarily a truthful profile. Shill bids of bidder~$l$ are given by $\mathcal{B}_{S}=\{b_k\}_{k\in S}$.~We define a merged bid $\tilde{\mathcal{B}}_l$ as $$\tilde{b}_l(x_l)=\min_{x_k\in\R_+^t,\,\forall k}\, \sum_{k\in S}b_k(x_k)\ \mathrm{s.t. }\sum_{k\in S}x_k=x_l.$$ We then define ${\tilde{\mathcal{B}}}=(\mathcal{C}_{-l},{\tilde{\mathcal{B}}}_{l})$.
		The VCG utility from shill bidding under ${\mathcal{B}}=(\mathcal{C}_{-l},{\mathcal{B}}_{S})$, $\sum_{k\in S}{u}_k^{\text{VCG}}({\mathcal{B}})$, is given by
		\begin{align*} &= \sum_{k\in S}[J({\mathcal{B}_{-k}})-J({\mathcal{B}})+b_k(x_k^*({\mathcal{B}}))]-{c}_l(\sum_{k\in S}x_k^*({\mathcal{B}}))\\
		&\leq [J({\mathcal{B}}_{-S} ) - J({\mathcal{B}})]+\sum_{k\in S} b_k(x_k^*({\mathcal{B}}))-{c}_l(\sum_{k\in S}x_k^*({\mathcal{B}}))\\ 
		&= [J({\mathcal{C}}_{-l} ) - J({\tilde{\mathcal{B}}})]+\tilde{b}_l(\sum_{k\in S}x_k^*({\mathcal{B}}))-{c}_l(\sum_{k\in S}x_k^*({\mathcal{B}}))\\ 
		&=  {u}_l^{\text{VCG}}({\tilde{\mathcal{B}}}) \\ 
		&\leq  {u}_l^{\text{VCG}}({\mathcal{C}})
		\end{align*}
		The first inequality follows from the supermodularity of~$J$. The second equality holds since we have $J({\tilde{\mathcal{B}}})=J({{\mathcal{B}}})$. This follows from the definition of the merged bid and the following implication. Since the goods of the same type are fungible for the central operator, the functions $g$ and~$d$ in fact depend on $\sum_{l\in L} x_l$. 
		The third equality follows from the definition of VCG utility. The second inequality is the dominant-strategy incentive-compatibility of the VCG mechanism. Therefore, the total VCG utility that $l$ receives from shill bidding is upper bounded by the utility that $l$ would receive by bidding truthfully as a single bidder. Making use of shills, hence, is not profitable. 

\QEDA
	\end{proof}
\section{Proof of Theorem~\ref{thm:conditions_on_bids}}\label{app:E}
To prove Theorem~\ref{thm:conditions_on_bids}, the following lemma is needed.

\begin{lemma}\label{lem:increasing_quantities}
	Under the market model \eqref{eq:simpler_clearing_model}, for an auction with bidders $S$ and $R=S\cup\{j\}$  with corresponding allocations  $x$ and $x'$,  marginally increasing costs imply that 
	$\forall l\in S, \
	x_l'\leq x_l.
	$
\end{lemma}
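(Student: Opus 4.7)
My plan is to proceed by contradiction via a pairwise exchange argument. Suppose $x'_{l^*} > x_{l^*}$ for some $l^* \in S$. First I would observe that both optima saturate the demand constraint, $\sum_{l \in S} x_l = M = \sum_{l \in R} x'_l$, since any slack could be eliminated by reducing some allocation by $m$ using that $c_l$ is nondecreasing. Combining with $x'_j \geq 0$ then yields $\sum_{l \in S} x'_l \leq \sum_{l \in S} x_l$, so there must exist $l^{**} \in S$ with $x_{l^{**}} - x'_{l^{**}} \geq m$, alongside $x'_{l^*} - x_{l^*} \geq m$.

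Next, I would set up two local optimality inequalities from mutually reversing single-unit swaps. The swap moving $m$ from $l^{**}$ to $l^*$ is feasible for $S$'s problem, so optimality of $x$ gives
\[
c_{l^*}(x_{l^*}+m) - c_{l^*}(x_{l^*}) \;\geq\; c_{l^{**}}(x_{l^{**}}) - c_{l^{**}}(x_{l^{**}}-m).
\]
Symmetrically, the swap moving $m$ from $l^*$ to $l^{**}$ is feasible for $R$'s problem, giving
\[
c_{l^{**}}(x'_{l^{**}}+m) - c_{l^{**}}(x'_{l^{**}}) \;\geq\; c_{l^*}(x'_{l^*}) - c_{l^*}(x'_{l^*}-m).
\]
The marginally-increasing hypothesis, applied to $l^*$ at the grid points $x_{l^*} \leq x'_{l^*}-m$ and to $l^{**}$ at $x'_{l^{**}} \leq x_{l^{**}}-m$, then yields two further inequalities that close the chain into a cycle, each strict whenever the corresponding gap exceeds $m$.

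Chaining the four inequalities forces each to hold with equality. This pins down $x'_{l^*} = x_{l^*} + m$ and $x'_{l^{**}} = x_{l^{**}} - m$, and equates the corresponding marginal costs of $l^*$ and $l^{**}$. Consequently, the swap in $x$ produces an alternative optimum $\tilde{x}$ of $S$'s problem, and the swap in $x'$ produces an alternative optimum $\tilde{x}'$ of $R$'s problem; crucially, $\tilde{x}$ differs from $x$ only at coordinates $l^*, l^{**}$, exactly as $\tilde{x}'$ differs from $x'$, and the two contested coordinate values coincide across the two auctions.

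The final step is to invoke the predetermined ordering used for tie-breaking. Because this ordering is fixed and its resolution of the contention between giving the extra $m$ to $l^*$ versus to $l^{**}$ depends only on the identities of $l^*$ and $l^{**}$, it must select the same direction in both auctions. But the tie-breaker selecting $x$ over $\tilde{x}$ prefers one direction, while the tie-breaker selecting $x'$ over $\tilde{x}'$ prefers the opposite, a contradiction. I expect this last step to be the most delicate: the exchange argument alone only yields multiple equal-cost optima, and it is precisely the consistency of the fixed ordering across the two auctions that converts ``multiple optima'' into the contradiction that rules out $x'_{l^*} > x_{l^*}$.
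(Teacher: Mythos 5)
Your exchange argument is sound up to the chain of equalities, and it is a genuinely different route from the paper's: the paper never localizes to a two-bidder swap, but instead works globally, transferring the aggregate amount $m'=\sum_{l\in S\setminus K}(x'_l-x_l)$ from the bidders whose allocations increased to the set $K=\{l\in S : x'_l<x_l\}$, and then using the \emph{strict} marginally-increasing hypothesis to turn the resulting comparison (inequality \eqref{eq:increas} in Appendix~E) into a strict cost improvement over $x$ in the $S$-auction, contradicting optimality of $x$ directly. In particular the paper's contradiction is with cost-optimality alone and makes no reference whatsoever to the tie-breaking rule. Your local version correctly establishes feasibility of both unit swaps (the implicit caps are respected because $x_{l^*}+m\le x'_{l^*}$ and $x'_{l^{**}}+m\le x_{l^{**}}$), and the four-inequality cycle does force all of them to be equalities, pinning down $x'_{l^*}=x_{l^*}+m$, $x'_{l^{**}}=x_{l^{**}}-m$ and producing alternative optima $\tilde x$ and $\tilde x'$.

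The genuine gap is the final step. The paper only assumes ``some tie-breaking rule according to a predetermined fixed ordering of the bidders''; nothing in that assumption guarantees the cross-auction consistency you invoke, namely that the $l^*$-versus-$l^{**}$ contention is resolved in the same direction in two different auctions whose bidder sets differ (by $j$) and whose optima also differ in the uncontested coordinates. That is an IIA-type property of the selection rule: it does hold for, say, a lexicographic selection with respect to the fixed ordering, but it does not follow from the bare statement in Section~\ref{sec:2}, and you neither state it as a hypothesis nor prove it. Note also that your chain of equalities shows the two competing allocations really do have equal cost, so at the level of a single $m$-swap no cost-based contradiction is available and some such extra input is unavoidable on your route; the paper sidesteps this by claiming strictness of \eqref{eq:increas} for its global transfer (a claim that is itself delicate exactly in the tie configuration you isolate, i.e.\ when each $l\in K$ receives either no transfer or the full gap $x_l-x'_l$). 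To close your proof you must either formalize the tie-breaking rule (e.g.\ as lexicographic in the fixed bidder ordering) and prove the pairwise-consistency property you use, or abandon the tie-breaking appeal and argue, as the paper does, via a reallocation that yields a strict cost decrease.
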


\begin{proof}
	The proof  follows by contradiction. That is, we will show that when $x'$ is such that ${x}'_{{l}} > x_{{l}}$, for some $l\in S$, then $x'$ can be modified to provide a lower cost via the allocation $q$ for bidders~$S$ (thus contradicting optimality of $x$). First, we notice that since bids are equally spaced by the amount $m$ and with marginally increasing cost,  $\sum_{l \in S}x_l = \sum_{l \in R}{x'_l} = M$ holds. Now, in order to procure exactly $M$ amount from bidders $R$, some bidders' allocations must decrease, that is, the set $K = \{ l \in S \,\rvert\, x_l'< x_l  \}$ is nonempty. Consider a feasible allocation $q'$ for the auction with bidders $R$ where the amount $M$ is being procured and
	\[
	q'_l = 
	\begin{cases}
	x'_j, &\text{ for }l=j,\\
	x_l, & \text{ for } l\in S\backslash K, \\
	q'_l, &\text{ for }l\in K \text{ where }x_l' \leq q'_l \leq x_l.
	\end{cases}
	\]
	So, $q'$ is constructed from $x'$ by transferring the amount $m'~=~\sum_{l\in S\backslash K} x'_l - x_l $ from bidders in $S\backslash K$ to bidders in $K$. In doing so, the inequality $x'_l \leq q_l' \leq x_l$ can be maintained:
	\[
	m' \leq \sum_{l\in K} (x_l -x'_l).
	\]
	The above inequality holds because when summing over $l\in S$, $x_l$'s sum to $M$ and $x'_l$'s sum to $M-x_j$.
	
	Since {$x'$} is optimal for bidders $R$ and $q'$ is not:
	\begin{align}
	J(\mathcal{C}_R) =& c_j{(x'_j)} + \sum_{l\in S\backslash K } c_{{l}}{(x'_l)} + \sum_{l \in K }{c_l}{(x'_l)}\notag\\
	\leq & c_j{(x'_j)} + \sum_{l\in S\backslash K} c_l{(x_l)} + \sum_{l\in K} c_l{(q'_l)}=\bar{J}(q'),  \label{eq:q'}
	\end{align}
	where we used $\bar{J}(q')$ as a short-hand-notation for the cost corresponding to choosing the allocation~$q'$ under truthful bidding. 
	
	Now, we use the marginally increasing true costs to replace the summations over $K$ in \eqref{eq:q'}. 
	In particular, define $q=(q_l : l\in S)$ so that 
	$$ 
	q_l := x_l +(x'_l- {q'_l})=
	\begin{cases}
	x'_l & \text{for } l \in S \backslash K,\\
	x_l + x'_l -q'_l & \text{for } l \in K.
	\end{cases}
	$$
	Note that $q$ is feasible since $x'$ and $q'$ have the same sum over $S$ (and thus cancel) and $x_l$ is feasible. 
	Further, since $(q_l - x_l)  = (x'_l- {q'_l})$,
	\begin{equation} 
	\label{eq:increas}
	\sum_{l \in K}{ c_l{(q_l)}  - c_l{({x'_l})}} < \sum_{l \in K}{c_l{(x_l)} - c_l{({q'_l})}}.
	\end{equation}
	Adding \eqref{eq:increas} to both side of \eqref{eq:q'} (and canceling $c_j{(x'_j)}$) gives
	\begin{align*}
	\bar{J}(q) &= \sum_{l\in S\backslash K} c_l{(x'_l)} + \sum_{l\in K} c_l{(q_l)}\\
	& < \sum_{l\in S\backslash K} c_l{(x_l)} + \sum_{l\in K} c_l{(x_l)} = J(\mathcal{C}_S),
	\end{align*}
	which contradicts the optimality of $x$. \QEDA
\end{proof}
\begin{corollary}
	Given the conditions in Theorem~\ref{thm:conditions_on_bids} and the optimal allocation to procure the amount $M$, for any lower amount $\tilde M\leq M$(while being multiple of $m$)  to be procured, the allocation for each bidder does not increase.
\end{corollary}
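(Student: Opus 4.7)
The plan is to proceed by contradiction, mimicking the mass-swap construction used in the proof of Lemma~\ref{lem:increasing_quantities}. Write $x$ for the optimal allocation at demand $M$ and $\tilde x$ for the optimal allocation at demand $\tilde M\le M$. Marginally increasing costs force the marginal bid prices to be strictly positive beyond the first step, so both feasibility constraints bind at the optimum: $\sum_l x_l=M$ and $\sum_l \tilde x_l=\tilde M$. The case $\tilde M=M$ is trivial under the fixed tie-breaking rule, so I would assume $\tilde M<M$ and, for contradiction, that $\tilde x_{l^\star}>x_{l^\star}$ for some bidder $l^\star$.

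Next, I would define $J^+=\{l:\tilde x_l>x_l\}$, $K=\{l:\tilde x_l<x_l\}$, and $\delta=\sum_{l\in J^+}(\tilde x_l-x_l)>0$. Subtracting the two sum identities gives $\sum_{l\in K}(x_l-\tilde x_l)=\delta+(M-\tilde M)>\delta$, so $K$ is nonempty and strict slack is present. I would then choose nonnegative multiples of $m$, namely $\alpha_l\in[0,x_l-\tilde x_l]$ for $l\in K$, with $\sum_{l\in K}\alpha_l=\delta$ and at least one coordinate $\alpha_{l_0}$ lying strictly inside $(0,x_{l_0}-\tilde x_{l_0})$; the strict slack $M-\tilde M\ge m$ guarantees that such a choice exists, if necessary after a single unit-perturbation between two coordinates. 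With these $\alpha_l$'s in hand, I would build two grid-valued allocations: $q$ feasible for demand $\tilde M$, obtained from $\tilde x$ by setting $q_l=x_l$ on $J^+$ (mass $\delta$ removed) and $q_l=\tilde x_l+\alpha_l$ on $K$ (mass $\delta$ added); and $q'$ feasible for demand $M$, obtained from $x$ by setting $q'_l=\tilde x_l$ on $J^+$ and $q'_l=x_l-\alpha_l$ on $K$. A direct check confirms both totals.

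Applying optimality of $\tilde x$ against $q$, optimality of $x$ against $q'$, and chaining the two resulting inequalities yields
\begin{equation*}
\sum_{l\in K}\bigl[c_l(x_l)-c_l(x_l-\alpha_l)\bigr]\;\le\;\sum_{l\in J^+}\bigl[c_l(\tilde x_l)-c_l(x_l)\bigr]\;\le\;\sum_{l\in K}\bigl[c_l(\tilde x_l+\alpha_l)-c_l(\tilde x_l)\bigr].
\end{equation*}
On the other hand, for each $l\in K$ the intervals $[\tilde x_l,\tilde x_l+\alpha_l]$ and $[x_l-\alpha_l,x_l]$ have equal length $\alpha_l$ with the second lying weakly above the first since $\tilde x_l\le x_l-\alpha_l$; the marginally-increasing property then gives $c_l(\tilde x_l+\alpha_l)-c_l(\tilde x_l)\le c_l(x_l)-c_l(x_l-\alpha_l)$, with strict inequality at $l=l_0$. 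Summing over $K$ contradicts the displayed chain.

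The main technical obstacle is the discrete bookkeeping of the transferred amounts $\alpha_l$: they must live on the grid, exhaust the deficit $\delta$ under the per-coordinate caps $x_l-\tilde x_l$, and leave at least one coordinate $\alpha_{l_0}$ strictly in the interior of its allowed range so that the marginally-increasing property produces a strict inequality. The strict slack $\sum_{l\in K}(x_l-\tilde x_l)-\delta=M-\tilde M\ge m$ is precisely what makes all three requirements simultaneously achievable, which is why $\tilde M=M$ is disposed of separately at the outset.
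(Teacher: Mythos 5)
Your swap construction and the two optimality comparisons are set up correctly, but the proof has a genuine gap exactly where strictness is needed: the claim that the slack $M-\tilde M\ge m$ guarantees grid-valued $\alpha_l\in[0,x_l-\tilde x_l]$ with $\sum_{l\in K}\alpha_l=\delta$ and at least one $\alpha_{l_0}$ strictly inside $(0,x_{l_0}-\tilde x_{l_0})$ is false. If every $l\in K$ has $x_l-\tilde x_l=m$, then every admissible $\alpha_l$ lies in $\{0,m\}$, i.e.\ at an endpoint of its range; no ``single unit-perturbation'' can create an interior coordinate because $(0,m)$ contains no grid point. In that case every per-coordinate comparison in your final step is an equality, the chained inequalities close into equalities, and no contradiction is obtained. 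This degenerate case is not vacuous: take three bidders on the grid $\{m,2m\}$ with incremental costs $(1,2)$, $(2,100)$, $(2,100)$, $M=3m$, $\tilde M=2m$. Then $x=(m,m,m)$ is optimal at $M$ and $\tilde x=(2m,0,0)$ is optimal at $\tilde M$, so $\tilde x_1>x_1$ while $K=\{2,3\}$ with both caps equal to $m$; your $q$ and $q'$ are then merely alternative optima of the respective problems, and no choice of $\alpha$ yields a strict inequality.

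The example also shows the gap cannot be closed within your argument as it stands: with an arbitrary selection among tied optima the stated monotonicity can genuinely fail, so any correct proof must use the paper's fixed tie-breaking rule (a consistent selection of optima) to dispose of the all-equalities case, not just of $\tilde M=M$. The paper itself obtains the corollary by a one-line reduction to Lemma~\ref{lem:increasing_quantities} (a decrease of the demand from $M$ to $\tilde M$ can be emulated by the entry of an additional bidder cheap enough to absorb $M-\tilde M$), inheriting whatever tie-handling that lemma uses, whereas you re-derive it from scratch; your route is fine in spirit, but it needs an added branch: when no interior $\alpha_{l_0}$ exists, conclude from the tight inequalities that $q'$ (respectively $q$) is an alternative optimum at $M$ (respectively $\tilde M$), and then invoke the fixed ordering to argue the selected optima cannot satisfy $\tilde x_{l^\star}>x_{l^\star}$.
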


This corollary follows from Lemma~\ref{lem:increasing_quantities}. Now, we are ready to prove Theorem~\ref{thm:conditions_on_bids}. 

\begin{proof}
	We prove that $J$ is supermodular. We adopt the same notation used in Lemma~\ref{lem:increasing_quantities} and we identify with $W\subseteq S$ the set of winners.
	For each $l \notin W $, we have by definition $x_{l} = 0$. 
	Thus $0=J(\mathcal{C}_{S\setminus{l}})-J(\mathcal{C}_S)$, (since the optimal solution is unchanged when $l$ is removed from S). By Lemma~\ref{lem:increasing_quantities},  ${x}'_{l}=0$ and so $0=J(\mathcal{C}_{R\setminus{l}})-J(\mathcal{C}_R)$ also. Thus, supermodularity holds for $l \notin W$.
	
	For each winning bidder $w \in W$, denote $u_{w}^{\text{VCG}}(\mathcal{C}_S) = J(\mathcal{C}_{S\setminus{w}})- J(\mathcal{C}_S)$. Adopting the same notation of Lemma~\ref{lem:increasing_quantities}, we can indicate it as: 
	\[
	u_{w}^{\text{VCG}}(\mathcal{C}_S):= - c_{w}{(x_{w})}+\sum_{l \in S_{-w}} { ( c_l{(\varrho_l)} - c_l{(x_l)} ) }, 
	\]
	where $\varrho_l$ are the optimal allocations of each $l \in S_{-w}$, when $w$ exits the auction. By Lemma~\ref{lem:increasing_quantities}, $\varrho_l\geq x_l$. Similarly, after bidder $i$ enters the auction, $u_{w}^{\text{VCG}}(\mathcal{C}_R) = J(\mathcal{C}_{R\setminus{w}})- J(\mathcal{C}_R)$. That is,
	\begin{equation*}
	u_{w}^{\text{VCG}}(\mathcal{C}_R):=- c_{w}{({x}'_{w})}  + \sum_{l \in S_{-w}} { ( c_l{(\varrho_l')} - c_l{({x'_l})} ) } +  c_i{(\varrho_i')} - c_i{({x'_i})},
	\end{equation*}
	where $\varrho_i'$ are the amounts accepted from $ l \in R_{- w}$ when $w$ exits the new auction. By Lemma~\ref{lem:increasing_quantities}, we again have $\varrho_l'\geq {x'_l}$.
	
	Notice that so far we applied Lemma~\ref{lem:increasing_quantities} to justify the increase of the accepted amounts, first, from each $l\in S_{- w}$ and now from $l \in R_{-w}$, due to the exit of $w$ from the auctions. We can apply Lemma~\ref{lem:increasing_quantities} again and affirm that $ x'_l \leq x_l\ \forall l \in S$, and in particular $ x'_{w} \leq x_{w}$, because of the entrance of $i$.
	
	We now find suitable lower and upper bounds to ensure  inequality  $u_{w}^{\text{VCG}}(\mathcal{C}_R) \leq u_{w}^{\text{VCG}}(\mathcal{C}_S)$. First, note that 
	$ J(\mathcal{C}_S)= \sum_{ l \in S_{-w}}{ c_l{(x_l)} } + c_{w}{(x_{w})} \leq \sum_{ l \in S_{-w}}{ c_l{(q_l)} } + c_{w}{(x'_{w})}$, 
	where $q_l$'s are from the cheapest allocation to procure the amount $(M - x'_w)$ among $S_{-w}$.
	By Lemma~\ref{lem:increasing_quantities} we  have $\varrho_l \geq q_l \geq x_l \hspace{1em} \forall l \in S_{-w}$, since $x_l$'s sum to $(M - x_w) \leq (M- x'_w) $ (due to $x'_w \leq x_w$), and $\varrho_l$'s sum to $M$. 
	Moreover, since every $c_l$ is marginally increasing, $q_l$'s are such that $\sum_{l \in S_{-w}} ({\varrho_l}- q_l) = x'_{w}$, because exactly the amount $M$ is purchased.
	Using the above suboptimal allocation, we  have a lower bound for $u_{w}^{\text{VCG}}(\mathcal{C}_S)$:  
	\begin{equation}
	\label{lower_bound}
	u_{w}^{\text{VCG}}(\mathcal{C}_S) \geq \sum_{l \in S_{-w}} { ( c_l{(\varrho_l)} - c_l{(q_l)} ) } - c_{w}{(x'_{w})} .
	\end{equation}
	Defining now $\delta_l =({\varrho_l} - q_l), \forall l \in S_{-w}$ we must have $\sum_{ l\in S_{-w}} \delta_l = x'_{w}$ and $ J(\mathcal{C}_{R\setminus{w}})= \sum_{l \in S_{-w}} {c_l{(\varrho_l')} } + c_i{(\varrho_i')} \leq   \sum_{l \in S_{-w}} {c_l{({x'_l} + \delta_l)} } + c_i{(x'_i)}$ , since the right hand side is a feasible cost to procure the amount $M$ among the bidders $\{S,i\}\setminus w$. Indeed, $\sum_{l \in S}{x'_l} + x'_i = M$ and $\sum_{l \in S_{-w}} {\delta_l} = x'_{w}$. Hence, we have: 
	\begin{equation*}
	\label{upper_bound}
	\begin{split}
	u_{w}^{\text{VCG}}(\mathcal{C}_R) \leq \sum_{l \in S_{-w}}& {(c_l{(x'_l + \delta_l)} - c_l{(x'_l)})} + (c_i{(x'_i)} - c_i{(x'_i)}) - c_{w}{(x'_{w})} .
	\end{split}
	\end{equation*}
	Moreover, via marginally increasing costs (also via strictly convex costs), we have: 
	\begin{equation} 
	\label{increasing_marginal}
	( c_l{(x'_l + \delta_l)} - c_l{(x'_l)} ) \leq  ( c_l{(\varrho_l)} - c_l{(q_l)} ),\, \forall l \in S_{-w}.
	\end{equation}
	The above holds because $\forall l \in S_{-w}$, $(\varrho_l - q_l) =( x'_l + \delta_l - x'_l) = \delta_l$ and $x'_l \leq q_l$. In particular, $x'_l$ are the amounts accepted to procure the amount $(M-x'_{w})$ among $\{S,i\} \setminus w$, while $q_l$ are those to procure the same amount among $S_{-w}$. Then, combining equations~\eqref{increasing_marginal} and \eqref{lower_bound}, we finally obtain $u_{w}^{\text{VCG}}(\mathcal{C}_R) \leq u_{w}^{\text{VCG}}(\mathcal{C}_S).$ As a result, we obtain supermodularity and this concludes the proof. \QEDA
\end{proof}

\section{Proof of Theorem~\ref{thm:re2_proof}}\label{app:F}
\begin{proof}
	We prove that $J$ is supermodular. To this end, we reparametrize the problem \eqref{eq:multiple} into another class of optimization problem. We define the function $f:2^{ L }\to\R_+$ as follows
	\begin{equation*}
		f(A)=\max_{\substack{\forall T \subseteq \{1,\ldots,t\}, \\ A\supseteq A^T}}\, M(T).
	\end{equation*}
	
	We consider the following optimization problem:
	\begin{equation}	\label{eq:multiple_type}
	\begin{split} 
	J(\mathcal{C}_S) =&  \min_{x\in\R^{|L|}_+  } {\ \sum_{l \in S}c_l( x_l)} \\
	&\ \ \mathrm{s.t. } \ \sum_{l\in A} x_l \geq f(A) ,\, \forall A \subseteq L, \\
	&\quad \quad\ \  x_l \leq \bar{X}_l,\, \forall l\in L, \\
	&\quad \quad\ \ x_l = 0,\, \forall l\in L\setminus S.
	\end{split}
	\end{equation}
	
	First, notice that $f(A^T)=M(T)$. Since $x_l\geq0,\,\forall l$, the constraints added by the definition of the function $f$ are all redundant constraints. Then, these constraints in \eqref{eq:multiple_type} are feasible, once the constraints in \eqref{eq:multiple} are satisfied. Hence, these problems are equivalent.  We also remark that $f(\emptyset)=0$ and $f$ is nondecreasing.
	
	Before we proceed, we need to show that supermodularity of the function $M$ implies supermodularity of the function $f$. Suppose $f(A)=M(T^A)$ and $f(B)=M(T^B)$ for some $T^A$, $T^B$. Then,
	\begin{equation*}
		\begin{split}
		f(A)+f(B)=\,&M(T^A)+M(T^B)\\
					\leq\,&M(T^A\cup T^B)+M(T^A\cap T^B)\\
					\leq\,&f(A\cup B)+f(A\cap B).
		\end{split}
	\end{equation*}
	First inequality follows from supermodularity of the function~$M$. The last inequality holds since these sets are feasible suboptimums for $f(A\cup B)$ and $f(A\cap B)$. We conclude that the function $f$ is supermodular.
	
	Note that, given supermodularity of the function $f$, the first set of constraints in \eqref{eq:multiple_type} defines a \textit{contra-polymatroid}, see \cite[Section~44]{schrijver2003combinatorial} for a detailed exposition. {This class of problems are important in combinatorial optimization because they can often be solved in polynomial time. For the proof, we extend the work in \cite{he2012polymatroid} on replenishment games to reverse auctions over contra-polymatroids and box constraints.}
	
	We first show that the constraint  $\sum_{l\in L} x_l = f(L)$ is redundant and can be added to the original constraint set. We denote $S^{c}=L\setminus S$ and denote $x^*$ as the optimal allocation for \eqref{eq:multiple_type}. 
	It can be shown that for every $x_k^*$, $k\in S$ , there exists a set $k\in A_k \subseteq L$ such that  $\sum_{l\in A_k} x^*_l = f(A_k)$ is tight at the optimal solution. We can prove this via contradiction. Assume, for  $x_k^*$, $k\in S$, there does not exist a set $k\in A_k\subseteq L$ such that $\sum_{l\in A_k} x^*_l \geq f(A_k)$ is tight. Then, one can simply decrease the value of $x^*_k$ and get a lower objective value. Furthermore, note that any constraint corresponding to $A\not\supset S^c$ is redundant to $A\cup S^c$ because $f$ is nondecreasing and $x_l = 0$, for all $l\in S^c$. Then, this set $A_k$ has to be a superset of $S^c$, $A_k \supset S^c $.
	
	Next, we show that if the constraints for $A$ and $B$ are tight, so is the constraint for $A\cup B$. 
	\begin{subequations}
		\begin{align}
		f(A\cup B) + f(A\cap B) &\geq f(A) + f(B)\label{11a} \\ &= \sum_{l\in A} x^*_l + \sum_{l\in B} x^*_l\label{11b} \\ &= \sum_{l\in A\cup B} x^*_l + \sum_{l\in A\cap B} x^*_l\label{11c} \\ &\geq f(A\cup B) + f(A\cap B). \label{11d}
		\end{align}
	\end{subequations}	
	Inequality~\eqref{11a} follows from supermodularity of $f$, $f(A\cup B)  - f(A) \geq f(B) - f(A\cap B)$. Equality~\eqref{11b} follows from $A$ and $B$ being tight. We arranged the terms in the equality~\eqref{11c}. Inequality~\eqref{11d} follows from the feasibility of $x^*$ for the problem in~\eqref{eq:multiple_type}. Then, it is easy to see that  \eqref{11d} is in fact an equality and we can conclude that $\sum_{l\in A\cup B} x^*_l = f(A\cup B) $. 
	
	Recall that the constraint corresponding to the set $ A_l\supset S^c\cup \{l\}$ is tight for $x_l^*$. Hence, we can conclude that the constraint corresponding to the set $\bigcup_{l\in S}A_l = L$ is also tight and $\sum_{l\in L} x_l^* = f(L)$ holds for optimal solution.
	
	Next, we reformulate the first set of constraints in \eqref{eq:multiple_type} as follows.
	\begin{equation}\label{eq:ref1}
	\mathcal{P} = \Big\{ x\in\R_+ \,\rvert\,  -f(A) \geq \sum_{l\in A^c} x_l -\sum_{l\in L} x_l ,\, \forall A \subseteq L\Big\}.
	\end{equation}	
	Define $h(A)=-f(A^c)$ where $h(\emptyset)= -f(L)= -\sum_{l\in L} x_l $ and reorganize the constraint set~\eqref{eq:ref1}. 
	\begin{equation*}\label{eq:ref}
	\begin{split}
	\mathcal{P} &= \Big\{ x\in\R_+ \,\rvert\, h(A^c) +\sum_{l\in L} x_l \geq \sum_{l\in A^c} x_l ,\, \forall A \subseteq L\Big\}\\
	& = \Big\{ x\in\R_+ \,\rvert\,  h(A) +\sum_{l\in L} x_l \geq \sum_{l\in A} x_l ,\ ,\forall A \subseteq L\Big\}\\
	&  = \Big\{ x\in\R_+ \,\rvert\,  h(A) - h(\emptyset) \geq \sum_{l\in A} x_l ,\, \forall A \subseteq L,\, h(\emptyset) = -\sum_{l\in L} x_l  \Big\}. 
	\end{split}	
	\end{equation*}	
	We see that $k(A)=h(A)-h(\emptyset)$ is a nondecreasing submodular function and it is normalized. In literature, the function  $k$ is called a \textit{rank function}\cite{schrijver2003combinatorial}. Note that, $k(S)=h(S)-h(\emptyset)=-g(L\setminus S)+g(L)$. From the feasibility of the problem $\eqref{eq:multiple_type}$, we have that $g(L\setminus S)=0$ and $k(S)=\sum_{l\in L} x_l  $. We reorganize the constraint set  and obtain the following:
	\begin{equation*}
	\mathcal{P} = \Big\{ x\in\R_+  \,\rvert\, \sum_{l\in A} x_l \leq k(A) ,\, \forall A \subseteq L,\, k(S) = \sum_{l\in L} x_l  \Big\}. \\
	\end{equation*}	
	Finally, given that $k$ is nondecreasing and $x_l = 0,\, \forall l\in L\setminus S$, we can show that the constraints corresponding to $A\supset S$ are all redundant upper bounds. Then, we obtain the following
	\begin{equation*}	
	\mathcal{P} = \Big\{ x\in\R_+  \,\rvert\, \sum_{l\in A} x_l \leq k(A) ,\, \forall A \subseteq S,\, k(S) = \sum_{l\in S} x_l  \Big\}.
	\end{equation*}	
	The following result is known, and we refer to \cite[Theorem~6.1]{yao1997stochastic} and \cite{fujishige1980lexicographically}. The set $\mathcal{P}\cap \{x \,\rvert\,   x_l \leq \bar{X}_l,\, \forall l\}$ is equivalent to the set,
	\begin{align*}
	&\mathcal{P}'= \Big\{ x\in\R_+  \,\rvert\, \sum_{l\in A} x_l \leq \bar f(A) ,\, \forall A \subseteq S,\,k(S) = \sum_{l\in S} x_l  \Big\}, 
	\end{align*}	
	where $\bar f(A)=\min_{B\subseteq A}\{ k(A\setminus B)+\sum_{l\in B}\bar{X}_l  \}$ and this function is also a rank function \cite{yao1997stochastic}. We also assert that $\bar f(S)=k(S)$. To see that:
	\begin{equation*}\label{eq:deriv}
	\begin{split}
	k(S)=\sum_{l\in S} x_l &= \sum_{l\in S\setminus B} x_l +\sum_{l\in B} x_l\\
	&\leq k(S\setminus B)+\sum_{l\in B}\bar{X}_l,\, \forall B\subseteq S, \\
	\end{split}	
	\end{equation*}	
	hence, $\bar f(S)=\min_{B\subseteq S}\{ k(S\setminus B)+\sum_{l\in B}\bar{X}_l  \}= k(S)$.
	Finally, we obtain the following
	\begin{equation}	\label{eq:final_form}
	\begin{split} 
	-J(\mathcal{C}_S) =&  \max_{\substack{x_l\in\R_+, \forall l \in S}}  \,\sum_{l \in S}-c_l( x_l) \\
	&\ \ \quad  \mathrm{s.t. } \  \ \sum_{l\in A} x_l \leq \bar f(A) ,\, \forall A \subseteq S, \\
	&\quad\quad\quad\ \ \sum_{l\in S} x_l=\bar f(S). \\
	\end{split}
	\end{equation} 
	
	The first set of constraints in \eqref{eq:final_form} defines a \textit{polymatroid}, see \cite[Section 44]{schrijver2003combinatorial}. In \cite[Theorem~3]{he2012polymatroid}, it is proven that maximizing a separable concave function over a polymatroid results in a submodular objective function. The result also directly includes optimizing over the base polymatroid where the polymatroid constraint set is intersected with the equality $\sum_{l\in S} x_l=\bar f(S)$. Then, invoking this result, we conclude that $-J$ is submodular, and $J$ is supermodular. \QEDA
\end{proof}
\section{Proof of Theorem~\ref{thm:no_collusions_bocs}}\label{app:G}
First, we need the following lemma.
\begin{lemma}\label{lem:implicore}
	Let $\bar u\in\R\times\R^{\rvert L\rvert}_+$ be a revealed utility allocation in $Core(\mathcal{B})$. Then, for every set of bidders $K\subseteq L$ we have $\sum_{l\in K}\bar{u}_l({\mathcal{B}})\leq J({\mathcal{B}}_{-K} ) - J({\mathcal{B}})$.
\end{lemma}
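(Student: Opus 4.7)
The plan is to apply Definition~\ref{def:r_core_def} directly; the claimed inequality is a one-line rearrangement of the core constraints, so there is essentially no obstacle here.

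First, I would instantiate the core inequality in~\eqref{eq:corerevdef} (with $R=L$) at the coalition $S=L\setminus K$. Since $L\setminus K\subset L$ whenever $K\neq\emptyset$, this is a valid choice and gives
\begin{equation*}
\bar u_0 + \sum_{l\in L\setminus K}\bar u_l \;\geq\; -J(\mathcal{B}_{L\setminus K}) \;=\; -J(\mathcal{B}_{-K}).
\end{equation*}
The trivial case $K=\emptyset$ is immediate, as both sides of the target inequality reduce to $0$.

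Next, I would subtract the displayed inequality from the binding efficiency equality $\bar u_0 + \sum_{l\in L}\bar u_l = -J(\mathcal{B})$, which is the first constraint in~\eqref{eq:corerevdef}. The terms involving $\bar u_0$ and $\bar u_l$ for $l\in L\setminus K$ cancel, yielding
\begin{equation*}
\sum_{l\in K}\bar u_l \;\leq\; J(\mathcal{B}_{-K}) - J(\mathcal{B}),
\end{equation*}
which is exactly the claim. This completes the argument; the lemma is really just the observation that the equivalent reformulation of the core used in Lemma~\ref{lem:lemma_core} (restricted there to $K\subseteq W$ for minimality) in fact holds for \emph{every} subset $K\subseteq L$ without any further work, because the core constraints in Definition~\ref{def:r_core_def} range over all strict subsets of $L$.
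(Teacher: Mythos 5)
Your argument is correct and is essentially identical to the paper's proof: both substitute the efficiency equality $\bar u_0=-J(\mathcal{B})-\sum_{l\in L}\bar u_l$ into the core inequality for the coalition $S=L\setminus K$ and rearrange. The handling of the trivial case $K=\emptyset$ is a harmless addition.
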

\begin{proof} Since $\bar u_0=-J(\mathcal{B})-\sum_{l\in L}\bar u_l$, we reorganize the inequality constraint as follows $-J(\mathcal{B})-\sum_{l\in L\setminus S}\bar u_l\geq-J(\mathcal{B}_S),\, \forall S \subseteq L$.
	Setting $K=L\setminus S$ yields the statement.  \QEDA
\end{proof}

Next, we prove that core-selecting mechanisms are coalition-proof.
\begin{proof}
	(i) Let $K $ be a set of colluders who would lose the auction when bidding their true values $\mathcal C_l=c_l$, when bidding ${\mathcal{B}}_l= b_l $ they become winners, that is, they are all allocated a positive quantity. We define $\hat{\mathcal{C}} = (C_{K},\mathcal{B}_{-K})$ and ${\mathcal{B}}=({\mathcal{B}}_{ K}, \mathcal{B}_{-K})$ where $\mathcal{B}_{-K}=\{b_l\}_{l\in L\setminus K}$ denotes the bidding profile of the remaining bidders. As a remark, the profile $\mathcal{B}_{-K}$ is not necessarily a truthful or a strategic profile. We denote the \text{utility} that each bidder $l$ receives as $u_l$. The total utility that colluders receive under ${\mathcal{B}}$ is
	\begin{align*} 
	\sum_{l\in K}{u}_l({\mathcal{B}}) &= \sum_{l\in K}\bar{u}_l({\mathcal{B}})+b_l(x_l^*({\mathcal{B}}))-{c}_l(x_l^*({\mathcal{B}}))\\
	&\hspace{-0.25cm}\leq J({\mathcal{B}}_{-K} ) - J({\mathcal{B}})+\sum_{l\in K} b_l(x_l^*({\mathcal{B}}))-{c}_l(x_l^*({\mathcal{B}}))\\ 
	&\hspace{-0.25cm} = J(\hat{\mathcal{C}}) - \Big[\sum_{l\in L} b_l(x_l^*({\mathcal{B}})) + d(x^*({\mathcal{B}}),y^*({\mathcal{B}})) -\sum_{l\in K} b_l(x_l^*({\mathcal{B}}))+\sum_{l\in K}{c}_l(x_l^*({\mathcal{B}}))\Big] \\
	&\hspace{-0.25cm}  = J(\hat{\mathcal{C}}) - \Big[\sum_{l\in K} {c}_l(x_l^*({\mathcal{B}})) + \sum_{l\in L\setminus K} {b}_l(x_l^*({\mathcal{B}})) +d(x^*({\mathcal{B}}),y^*({\mathcal{B}}))\Big] \\ &\hspace{-0.25cm}\leq 0 = \sum_{l\in K}{u}_l({\hat{\mathcal{C}}}).
	\end{align*}
	The first equality follows from the core-selecting payment rule, where $\bar{u}({\mathcal{B}})$ is the revealed utility allocation. {The first inequality follows from Lemma~\ref{lem:implicore}.} The second equality comes from the fact that the set $K$ originally was a group of losers, {so $J(\mathcal{B}_{-K})=J(\hat{\mathcal{C}})$}. 
	After substituting these terms, we see that  the term in brackets is the cost $\bar{J}$ of $\hat{\mathcal{C}}$ but evaluated at a feasible suboptimal allocation $(x^*(\mathcal{B}),y^*(\mathcal{B}))$. Then, $\sum_{l\in K}{u}_l({\mathcal{B}}) $ is upper bounded by $0$ which is the total utility that the colluders would receive bidding truthfully. 
	
	As a result, there is at least one colluder not facing any benefit in collusion. Moreover, they cannot increase their collective utility by a joint deviation. Hence, collusion is not profitable for the losing bidders.
	
	(ii) Define $\mathcal{C}=(\mathcal{C}_{-l},\mathcal{C}_{l})$, where $\mathcal{C}_{-l}$ denotes the bidding profile of the remaining bidders. The profile $\mathcal{C}_{-l}$ is not necessarily a truthful profile. Shill bids of bidder~$l$ are given by $\mathcal{B}_{S}=\{b_k\}_{k\in S}$.~We define a merged bid $\tilde{\mathcal{B}}_l$ as $$\tilde{b}_l(x_l)=\min_{x_k\in\R_+^t,\,\forall k}\, \sum_{k\in S}b_k(x_k)\ \mathrm{s.t. }\sum_{k\in S}x_k=x_l.$$ We then define ${\tilde{\mathcal{B}}}=(\mathcal{C}_{-l},{\tilde{\mathcal{B}}}_{l})$.
	The total utility obtained from shill bidding under ${\mathcal{B}}=(\mathcal{C}_{-l},{\mathcal{B}}_{S})$, $\sum_{k\in S}{u}_k({\mathcal{B}})$, is given by
	\begin{align*} &= \sum_{k\in S}[\bar{u}_k({\mathcal{B}})+b_k(x_k^*({\mathcal{B}}))]-{c}_l(\sum_{k\in S}x_k^*({\mathcal{B}}))\\
	&\leq [J({\mathcal{B}}_{-S} ) - J({\mathcal{B}})]+\sum_{k\in S} b_k(x_k^*({\mathcal{B}}))-{c}_l(\sum_{k\in S}x_k^*({\mathcal{B}}))\\ 
	&= [J({\mathcal{C}}_{-l} ) - J({\tilde{\mathcal{B}}})]+\tilde{b}_l(\sum_{k\in S}x_k^*({\mathcal{B}}))-{c}_l(\sum_{k\in S}x_k^*({\mathcal{B}}))\\ 
	&=  {u}_l^{\text{VCG}}({\tilde{\mathcal{B}}}) \\ 
	&\leq  {u}_l^{\text{VCG}}({\mathcal{C}})
	\end{align*}
	The first inequality follows from the core-selecting payment rule and Lemma~\ref{lem:implicore}. The second equality holds since we have $J({\tilde{\mathcal{B}}})=J({{\mathcal{B}}})$. This follows from the definition of the merged bid and the following implication. Since the goods of the same type are fungible for the central operator, the functions $g$ and~$d$ in fact depend on $\sum_{l\in L} x_l$. 
	The third equality follows from the definition of the VCG utility. The second inequality is the dominant-strategy incentive-compatibility of the VCG mechanism. Therefore, the total utility that $l$ receives from shill bidding is upper bounded by the utility that $l$ would receive by bidding truthfully as a single bidder in a VCG auction. Making use of shills, hence, is not profitable with respect to the VCG utilities. 
\QEDA
\end{proof}

\bibliographystyle{IEEEtran}
\bibliography{IEEEabrv,library}
\end{document}